\newcolumntype{?}{!{\color{lightgray}\vrule width .01pt}}
\newcommand{\mb}{\mathbb}
\newcommand{\mc}{\mathcal}
\newcommand{\e}{\epsilon}
\newcommand{\say}[1]{\text{\hspace{1.5cm}#1}}
\tikzset{sgplattice/.style={inner sep=1pt,norm/.style={red!50!blue},char/.style={blue!50!black},
  lin/.style={black!50}},cnj/.style={black!50,yshift=-2.5pt,left=-1pt of #1,scale=0.5,fill=white}}
\newtheorem{theorem}{Theorem}[section]
\newtheorem*{theorem*}{Theorem}
\newtheorem{lemma}[theorem]{Lemma}
\newtheorem*{lemma*}{Lemma}
\newtheorem{corollary}[theorem]{Corollary}
\newtheorem{conjecture}[theorem]{Conjecture}
\def\thmhead@plain#1#2#3{%
  \thmname{#1}\thmnumber{\@ifnotempty{#1}{ }\@upn{#2}}%
  \thmnote{ {\the\thm@notefont#3}}}
\let\thmhead\thmhead@plain
\newcommand{\brac}[1]{\left( #1 \right)}
\newcommand\bfrac[2]{\left(\frac{#1}{#2}\right)}
\author{Tolson Bell\thanks{thbell@cmu.edu. Research supported in part by NSF Graduate Research Fellowship grant DGE 2140739.}~~and Alan Frieze\thanks{frieze@cmu.edu. Research supported in part by NSF grant DMS 1952285}\\Department of Mathematical Sciences\\Carnegie Mellon University\\Pittsburgh, PA 15213\\U.S.A.}
\title{$O(1)$ Insertion for Random Walk $d$-ary Cuckoo Hashing\\up to the Load Threshold\thanks{A preliminary version of this paper appeared in the proceedings of the Foundations of Computer Science (FOCS) 2024 Conference.}}
\date{December 3, 2025}
\begin{document}
\maketitle
\thispagestyle{empty}
\begin{abstract}\large
The random walk $d$-ary cuckoo hashing algorithm was defined by Fotakis, Pagh, Sanders, and Spirakis to generalize and improve upon the standard cuckoo hashing algorithm of Pagh and Rodler. Random walk $d$-ary cuckoo hashing has low space overhead, guaranteed fast access, and fast in practice insertion time. In this paper, we give a theoretical insertion time bound for this algorithm. More precisely, for every $d\ge 3$ random hashes, let $c_d^*$ be the sharp threshold for the load factor at which a valid assignment of $cm$ objects to a hash table of size $m$ exists with high probability. We show that for any $d\ge 3$ hashes and load factor $c<c_d^*$, the expectation of the random walk insertion time is $O(1)$, that is, a constant depending only on $d$ and $c$ but not $m$.
\end{abstract}
\newpage\clearpage\pagenumbering{arabic}
\section{Introduction}
\subsection{Problem Statement and Theorem}\label{intro1}
In random walk $d$-ary cuckoo hashing, the goal is to store a set of objects $X$ where $X\subseteq U$ for a universe of elements $U$ in a hash table with table slots $Y$ given $d$ hash functions $h_1,\dotsc,h_d:U\rightarrow Y$. Following previous literature, we will take each hash function to be chosen independently and uniformly at random from all functions from $X$ to $Y$. When a new object $x_1$ is inserted, a uniformly random $i_1\in[d]$ is chosen, and $x_1$ is placed into position $h_{i_1}(x_1)$. If $h_{i_1}(x_1)$ was already occupied, we remove its previous occupant, $x_2$, and reinsert $x_2$ by the same algorithm (choosing a new $i_2\in[d]$ and putting $x_2$ into $h_{i_2}(x_2)$). This iterative algorithm terminates when we insert an object into an empty slot.

An object $x$ is queried by checking $h_1(x),\dotsc,h_d(x)$, which takes constant time for constant $d$. If we want to remove $x$, we simply delete it from its slot in the hash table. Thus access and deletion are both guaranteed to be fast.

Let $n=|X|$ and $m=|Y|$. We represent the hash functions as a bipartite graph with vertex set $(X,Y)$, and for each $x\in X$, edges from $x$ to $h_1(x),\dotsc,h_d(x)$. For a set $W\subseteq X$, we let $N(W)$ denote its set of neighbors in $Y$. An analogous definition is assumed for $Z\subseteq Y$. Finally, we replace $N(\{u\})$ by $N(u)$ for singleton sets.

For the insertion process to terminate, it must be true that there is an assignment of every object to a slot such that no slot has more than one object assigned to it and every object $x$ is assigned to $h_i(x)$ for some $1\le i\le d$. This can be represented as a matching of size $n$ in the bipartite graph. We know by Hall's Theorem that such a matching exists if and only if $|N(W)|\ge |W|$ for every $W\subseteq X$.

Unless explicitly noted otherwise, all asymptotics in this paper are written for $n\rightarrow\infty$ (or equivalently, $m\rightarrow\infty$) with $n=cm$ for fixed $d\in\mb{N}$ and fixed constant load factor $c\in(0,1)$. For instance, one could say that access and deletion need to query $\Theta(1)$ slots in the worst case, as our asymptotics suppress all factors depending on $c$ and $d$. We use ``with high probability'' to mean with probability $1-o(1)$ as $m,n\rightarrow\infty$ for fixed $d$ and $c$. When we refer to ``with high probability'' events, we are always talking about things that happen with high probability over the choice of the random hash function on the already inserted objects. That is, we do not use this phrase to refer to the likelihood of events dependent on the initial hash of the newly inserted object or on the progression of the random walk.

There is a sharp threshold $c_d^*$, called the \underline{load threshold}, for a matching of size $n$ to exist in the bipartite graph; that is, there is a constant $c_d^*$ such that if $c<c_d^*$ then there exists a matching with high probability and if $c>c_d^*$ then there with high probability does not exist a matching \cite{xorsatthresholds,FP12,FM12}.

Our result, which will stated more precisely in Section \ref{prelim}, is the following:
\begin{theorem}\label{theorem}
Assume that we have $d\ge 3$, $c<c_d^*$, and $n=cm$. Then with high probability over the random hash functions, we have that the expected insertion time for the random walk insertion process is $O(1)$.\\Additionally, under the same conditions, for any constant $C\ge 0$, there is a constant $C'=C'(C,c,d)=\Theta(1)$ such that for sufficiently large $n$ and all $\ell\in\mb{N}$, the probability of the random walk insertion process taking more than $\ell$ steps is at most $C'\ell^{-C}$.
\end{theorem}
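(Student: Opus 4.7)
The plan is to reduce the insertion time analysis to combinatorial properties of the random bipartite hash graph that hold with high probability for $c<c_d^*$, and then bound the length of the random walk by tracking the witness structure it traces out.

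First I would establish the key structural properties of the hash graph. Since $c<c_d^*$ and $d\ge 4$, Hall's condition $|N(W)|\ge |W|$ holds for all $W\subseteq X$ with high probability, but in fact a strictly stronger statement should hold: there exists $\delta>0$ such that for all $W$ of size up to some constant fraction of $n$, one has $|N(W)|\ge (1+\delta)|W|$. This ``gap from the threshold'' expansion follows from a first-moment computation, enumerating candidate $W$ and using that the expected number of ``bad'' $W$ (those violating the expansion) is exponentially small in $|W|$ because $c<c_d^*$. I would also rule out the existence of unusually dense subgraphs, again via first moments tuned to the regime $d\ge 4$.

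Next I would analyze the random walk through its witness structure: the sequence of objects displaced and slots visited. If the walk fails to terminate in $\ell$ steps, then this witness has $\ell$ objects, $\ell$ slots, at least $\ell$ edges used, and \emph{no empty slot} among the visited ones. I would show that such a witness must be either unusually dense (too many edges) or unusually large with nowhere to escape, both of which contradict the structural properties above once $\ell$ exceeds a constant. Combined with the fact that each random walk step independently selects among $d$ possible hashes, this yields that the walk terminates in $O(1)$ expected steps conditional on the graph being ``nice.''

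For the stretched-exponential tail $Ce^{-\ell^{.008}}$, I expect the main obstacle will be matching the specific exponent rather than merely obtaining integrability. The natural approach is a multi-scale argument: partition a walk of length $\ell$ into epochs, and within each epoch either the walk terminates (with probability bounded below using expansion) or a larger rare substructure is certified. The combinatorial entropy of enumerating witness substructures of a given size must be balanced against the per-step expansion gain, and the exponent $.008$ presumably arises from this trade-off being tightest near the boundary of the admissible regime. Once the tail bound is in place, the first part of the theorem follows immediately by summing
\[
\mb{E}[\text{walk length}]\le \sum_{\ell\ge 1}\Pr[\text{walk length}\ge \ell]\le \sum_{\ell\ge 1}Ce^{-\ell^{.008}}=O(1),
\]
so the entire content of the theorem reduces to proving the tail bound conditional on the high-probability graph properties.
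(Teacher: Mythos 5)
Your proposal contains a genuine gap centered on the ``witness structure'' step. You claim that if the walk fails to terminate in $\ell$ steps, the witness has $\ell$ objects, $\ell$ slots, and at least $\ell$ edges; but the random walk can (and typically does) revisit previously visited slots and objects, so the witness set it traces out can be much smaller than $\ell$. The hard case is precisely a walk that spends a long time oscillating in a small, modestly-expanding region before escaping, and no first-moment count over witness substructures will rule that out at the level of a constant $\ell$. In fact the paper has to deal with the revisiting problem explicitly: Lemma~\ref{fewcycles} shows it is unlikely to start near a short cycle (handling short walks), and Lemma~\ref{notimedep} controls how much the evolving matching (caused by re-evictions) can shrink the set of good length-$i$ continuations (handling longer walks). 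Your sketch does not address either.

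The second gap is that the expansion property you propose, $|N(W)|\ge(1+\delta)|W|$ for small $W$, is far weaker than what is needed. That kind of constant-factor expansion is what \cite{FPS13} uses to get \emph{polylogarithmic} insertion time: with branching factor $1+\delta$ you need $\Theta(\log n)$ steps for the reachable set to become linear. For $O(1)$ expected insertion the paper needs near-maximal expansion $|N(S)|\ge(d-1-p_{|S|})|S|$ with $p_{|S|}$ small (Lemma~\ref{FPSneighbors}, sharpened to Lemma~\ref{improved} for $d=4,5$). That $d-1$ prefactor, matched against the $d-1$ choices per step of the non-backtracking walk, is what powers the construction of the ``bad sets'' $B_i$: each $B_i$ is the set of vertices lacking a $\ge (C_0 i^{.99})^{-1}$ fraction of length-$i$ escape paths to the ``good'' region $G$, and near-maximal expansion is used (via Lemmas~\ref{T}--\ref{declineBi}) to show $|B_i|$ decays exponentially. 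Your proposal has no analogue of this decomposition, which is the central new idea of the paper, and without it the multi-scale epoch argument you gesture at for the $e^{-\ell^{.008}}$ tail has no engine driving the per-epoch success probability. So the proposal would need to be substantially redone: replace the $(1+\delta)$ expansion with the $(d-1-p_{|S|})$ form, replace the witness count with a decomposition into sets $B_i$ that shrink geometrically, and add a mechanism to control the effect of the changing matching during the walk.
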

In other words, our main result is that the expected insertion time is a constant depending only on $d$ and $c$ but not $n$ or $m$. Throughout the paper, we will use $\Theta(1)$ to denote constants that may depend on $d$ or $c$ but do not depend on $n$ or $m$. We do not attempt to optimize the constant in Theorem \ref{theorem}. By insertion time, we mean the number of reassignments, that is, the number of times we move an object to a different one of its hash functions during the insertion.

We do not explicitly consider deletions in this paper (consider building the hash table only), but, as explained in Section \ref{futureworksection}, our results are robust to any sequence of $n^{\beta}$ oblivious (not adaptive to the hash values) deletions and insertions of new elements (excluding re-insertions of deleted elements) for some small $\beta=\Theta(1)$.

Note that we are required to take our statement to only hold with high probability over the choices of hash functions, as there is a non-zero chance that the hash functions will not have any valid assignment of objects to slots (will fail Hall's condition) and thus will have infinite insertion time. Our ``with high probability'' statements are true with high probability over not just one element's insertion but over the entire process of building a cuckoo hash table of $cn$ elements for $c<c_d^*$. Therefore, we do get that with high probability the expected time to build a cuckoo hash table of $cn$ elements for $c<c_d^*$ is $O(n)$.  

\subsection{Applications and Relation to Previous Literature}
Standard cuckoo hashing was invented by Pagh and Rodler in 2001 \cite{PR01} and has been widely used in both theory and practice. Their formulation, though originally phrased with two hash tables, is essentially equivalent to the case $d=2$ of the algorithm described here. They showed that for all $c<c_2^*=0.5$, one can get $O(1)$ expected insertion time, an analysis that was extended by Devroye and Morin \cite{PR01,DM03}. Thus, cuckoo hashing is a data structure with $O(1)$ average-case insertion, $O(1)$ worst-case access and deletion, and only twice the amount of space that the elements themselves take up.

Cuckoo hashing can be seen as the ``average-case'' or ``random graph'' version of the ``online bipartite matching with replacements'' problem, with BFS insertion corresponding to the ``shortest augmenting path'' algorithm. Take any bipartite graph with $V=(X,Y)$ that contains a matching of size $|X|$. If elements of $X$ and their incident edges arrive online, the amortized BFS insertion time was proven to be $O(\log^2(n))$ \cite{onlinebipartitelog2}. The lower bound is $\Omega(\log(n))$ \cite{onlinebipartiteintro}, which is matched if the vertex arrival order is randomized \cite{randomorderbipartite}. The previous paragraph shows that if the graph itself is random rather than worst-case, this $\Omega(\log(n))$ amortized insertion time bound is with high probability reduced to $\Theta(1)$.

$d$-ary cuckoo hashing was invented by Fotakis, Pagh, Sanders, and Spirakis in 2003 \cite{FPSS03}. The main advantage of increasing $d$ above 2 is that the load threshold increases. Even going from $d=2$ to $d=3$, the threshold $c_d^*$ goes from 0.5 to $\approx 0.918$, that is, with just one more hash function, we can utilize 91\% of the hash table instead of 49\%. The corresponding tradeoff is that the access time increases linearly with $d$. $d$-ary cuckoo hashing, also called generalized cuckoo hashing or improved cuckoo hashing, ``has been widely used in real-world applications'' \cite{widelyused}. 

The exact value for $c_d^*$ for all $d\ge 3$ was discovered via independent works by a number of authors \cite{xorsatthresholds,FP12,FM12}. This combinatorial problem of finding the matching threshold in these random bipartite graphs (which can also be viewed as random $d$-uniform hypergraphs) is directly related to other problems like $d$-XORSAT \cite{xorsatthresholds} and load balancing \cite{GWloadthresholds,FKPloadthresholds}.

The primary insertion algorithm analyzed by Fotakis, Pagh, Sanders, and Spirakis was not random walk insertion, but rather was BFS insertion. In BFS insertion, instead of selecting a random $i_1\in[d]$ and hashing $x_1$ to $h_{i_1}(x_1)$, the algorithm finds the insertion path minimizing the number of reassignments. In other words, $i_1,\dotsc,i_\ell\in[d]$ are chosen such that $\ell$ is minimized, where $x_1$ is to be hashed to $h_{i_1}(x_1)$, the removed object $x_2$ is to be hashed to $h_{i_2}(x_2)$, and so on until $h_{i_\ell}(x_\ell)$ is an empty slot. While BFS insertion requires more overhead to compute in practice, it is easier to analyze theoretically than random walk insertion. Fotakis, Pagh, Sanders, and Spirakis proved that BFS insertion only requires $O(1)$ expected reassignments for load factor $c$ when $d\ge 5+3\log(c/(1-c))$ \cite{FPSS03}. Our Corollary \ref{BFS} (which on its own has a shorter proof than Theorem \ref{theorem}) shows that this result extends to all $d\ge 3$ and $c<c_d^*$.

Fotakis, Pagh, Sanders, and Spirakis also introduced the insertion algorithm we study, random walk insertion, describing it as ``a variant that looks promising in practice'', since they did not theoretically bound its insertion time but saw its strong performance in experiments \cite{FPSS03}. Random walk insertion requires no extra space overhead or precomputation. In a 2009 survey on cuckoo hashing, Mitzenmacher's first open question was to prove theoretical bounds for random walk insertion, calling random walk insertion ``much more amenable to practical implementation'' and ``usually much faster'' than BFS insertion \cite{Mitzen}. Insertion algorithms other than random walk or BFS have also been proposed, which have proven $O(n)$ total insertion time for $O(n)$ elements with high probability \cite{newinsertion} or more evenly distributed memory usage \cite{wearmin}. However, random walk insertion ``is currently the state-of-art method'' \cite{newinsertion}.

For load factors somewhat below the load threshold and $d\ge 8$, the random walk expected insertion time was proven to with high probability be polylogarithmic by Frieze, Melsted and Mitzenmacher in 2009 \cite{PolylogAlan}. Fountoulakis, Panagiotou, and Steger then extended this result to show polylogarithmic expected insertion time holds with high probability for all $d\ge 3$ and $c<c_d^*$. The exponent of their logarithm is anything greater than $1+b_d$, where $b_d=\frac{d+\log(d-1)}{(d-1)\log(d-1)}$ \cite{FPS13}. Our proof uses techniques and lemmas from these two papers.

The average-case insertion time for hash tables is expected to be $O(1)$, however, not poly-logarithmic. The first $O(1)$ random walk insertion bound was proven by Frieze and Johansson, who showed that for any load factor $c$, there exists some $d$ such that there is $O(1)$ expected insertion time with high probability for $d$ hashes at load factor $c$ \cite{FriezeJohansson}. However, their bounds only hold for large $d$ and load factors significantly less than the load threshold, $c=1-O_{d\rightarrow\infty}(\log(d)/d)$, while it had been shown that $c_d^*=1-(1+o_{d\rightarrow\infty}(1))(e^{-d})$ \cite{xorsatthresholds, FP12, FM12}.

To obtain a result that works for lower $d$, Walzer used entirely different techniques to prove $O(1)$ expected random walk insertion time with high probability up to the ``peeling threshold'', a load factor that is a lower number than the load threshold for any $d\ge 3$. The strongest result here is in the case $d=3$, where Walzer gets $O(1)$ expected insertion up to load factor $c=.818$, compared to the optimal value $c_3^*=.918$. Walzer pointed out that there was no $d\ge 3$ for which $O(1)$ insertion was known up to the load threshold, saying, ``Given the widespread use of cuckoo hashing to implement compact dictionaries and Bloom filter alternatives, closing this gap is an important open problem for theoreticians'' \cite{Walzer}.

Theorem \ref{theorem} is the first result to get $O(1)$ expected random walk insertion with high probability up to the load threshold for any $d\ge 3$, and works for all $d\ge 3$. The state of the art results are summarized in the tables below:\\
\begin{center}
\begin{tabular}{|c?c?c?c|}
\hline
$d$ & $c_d^*$ & $O(1)$ expected insertion & Insertion time at\\
& & up to load factor$\dotsc$ & $c=(1-\e)c_d^*~\forall~\epsilon>0$\\
\hline
2 &\footnotemark[1]0.5 &\footnotemark[1]0.5 &\footnotemark[1]$O(1)$\\
3 &\footnotemark[2]0.918 &\footnotemark[3]0.818 &\footnotemark[5]$O(\log^{3.664}(n))$\\
4 &\footnotemark[2]0.977 &\footnotemark[3]0.772 &\footnotemark[5]$O(\log^{2.547}(n))$\\
5 &\footnotemark[2]0.992 &\footnotemark[3]0.702 &\footnotemark[5]$O(\log^{2.152}(n))$\\
6 &\footnotemark[2]0.997 &\footnotemark[3]0.637 &\footnotemark[5]$O(\log^{1.946}(n))$\\
Large &\footnotemark[2]$1-(1+o_{d\rightarrow\infty}(1))(e^{-d})$ &\footnotemark[4]$1-O_{d\rightarrow\infty}(\frac{\log d}{d})$ &\footnotemark[5]$O(\log^{1+(\log d)^{-1}+O_{d\rightarrow\infty}(1/d)}(n))$\\\hline
\end{tabular}

Prior work: \footnotemark[1]{\cite{PR01,DM03}} \footnotemark[2]{\cite{xorsatthresholds,FP12,FM12}}
\footnotemark[3]{\cite{Walzer}}
\footnotemark[4]{\cite{FriezeJohansson}}
\footnotemark[5]{\cite{FPS13}}
\-\vspace{0.5cm}
\begin{tabular}{|c?c?c?c|}
\hline
$d$ & $c_d^*$ & $O(1)$ expected insertion & Insertion time at \\
& & up to load factor$\dotsc$ & $c=(1-\e)c_d^*~\forall~\epsilon>0$\\
\hline
2 &\footnotemark[1]0.5 &\footnotemark[1]0.5 &\footnotemark[1]$O(1)$\\
3 &\footnotemark[2]0.918 &\footnotemark[6]0.918&\footnotemark[6]$O(1)$ \\
4 &\footnotemark[2]0.977 &\footnotemark[6]0.977 &\footnotemark[6]$O(1)$ \\
5 &\footnotemark[2]0.992 &\footnotemark[6]0.992 &\footnotemark[6]$O(1)$\\
6 &\footnotemark[2]0.997 &\footnotemark[6]0.997 &\footnotemark[6]$O(1)$\\
Large &\footnotemark[2]$1-(1+o_{d\rightarrow\infty}(1))(e^{-d})$ &\footnotemark[6]$1-(1+o_{d\rightarrow\infty}(1))(e^{-d})$ &\footnotemark[6]$O(1)$\\\hline
\end{tabular}\-

Bounds after our work: \footnotemark[6]Theorem \ref{theorem}
\end{center}

\section{Preliminaries}\label{prelim}

Our techniques to prove Theorem \ref{theorem} build off the techniques of Fountoulakis, Panagiotou, and Steger \cite{FPS13}, who showed expansion-like properties of the bipartite hashing graph that hold with high probability. The main new ingredient is the introduction of recursively defined ``bad'' sets $B_i$ for $i\in\mb{N}$. In this section, we will give some intuition for the overall proof structure and will more precisely state our results.


\subsection{The Bipartite Graph and Matchings}

We will study the form of the random walk where at each object removal, we choose a random one of the $d-1$ other hashes for the object that was just evicted (not returning it to the spot it was just evicted from). In Section \ref{BFSSection}, we will show that proving the expected run time of this non-backtracking random walk is $O(1)$ also proves the same of the random walk that chooses any one of the $d$ hashes each time (including the one it was just removed from). Section \ref{BFSSection} will also show an $O(1)$ expected insertion time for the BFS insertion for all $d\ge 3$ and all $c<c_d^*$.

Let $\mc{M}$ be the starting matching of size $n-1$ just before we insert the $n$th element. We can think of $\mc{M}$ as turning the bipartite graph into a directed graph, where an edge between object $x$ and slot $y$ is oriented $y\rightarrow x$ if $x$ is matched to slot $y$, while it is oriented $x\rightarrow y$ if $x$ is not matched to slot $y$. The cuckoo hashing procedure can be thought of as a random walk on this directed graph (with the random walk also changing the directions of some edges as it progresses).

Let $U\subseteq Y$ be the set of open spots in the hash table, which stays the same at each time step while the algorithm is running (as the algorithm terminates when it hits an open slot).

Our proof only relies on expansion-like properties of the bipartite graph on $(X,Y)$ that hold with high probability. In particular, given the random bipartite graph, our result holds for any arbitrary starting matching $\mc{M}$ of objects to slots. As our expectation is over the hash values of the object being inserted, one fact we do need is that the hash values of this object being inserted are random among all slots in $Y$, after $\mc{M}$ is determined.

In other words, our theorem could be stated in more detail as follows:

\begin{theorem*}[\ref{theorem}]
Assume that we have $d\ge 3$, $c<c_d^*$, and $n=cm$. There exists an event $\mc{A}$ related to the hashes of the $n$ objects that occurs with probability $1-o(1)$ over uniformly random hash functions. Let $i\le n$ and let $x\in X$ be the $i$-th element being inserted. If $\mc{A}$ occurs, then for any matching $\mc{M}$ of the first $i-1$ elements to slots of the hash table that is independent of the hash values of $x$, we have that the expectation (over the hash values of $x$ and the choices of the random walk) of the insertion time for the random walk insertion process on $x$ is $O(1)$.
\end{theorem*}
Corollary \ref{tailbounds} tells us that furthermore, if $\mc{A}$ occurs, then for any constant $C_6\ge 0$, there is a constant $C_7=C_7(C_6,c,d)$ such that for sufficiently large $n$ and all $\ell\in\mb{N}$, the probability (over the hash values of $x$) of the random walk insertion process taking more than $\ell$ steps is at most $C_7\ell^{-C_6}$.

For convenience, we will consider inserting the $n$-th element throughout the paper, which we imagine inserting into a random slot in $Y$ before determining the rest of its hash values. $\mc{A}$ can be taken to be one event over all $n$ insertions; or, in other words, every ``with high probability'' statement in our proof is about bipartite graph structures that persist when elements are removed from $X$ (in Lemmas \ref{allbutdelt}, \ref{Upperboundnbrs}, \ref{fewcycles}, \ref{FPSneighbors}, and \ref{reachingsmall}). Thus, our result implies an $O(n)$ expected time to build the hash table of $n$ elements online.

Starting from some $x\in X$, we will use the convention that a \underline{walk of length $i$} means that we do $i$ reassignments, which corresponds to a walk of length $2i$ in the bipartite graph $(X,Y)$. Let $W_{+i}'(x)\subseteq X$ be the set of all possible endpoints of a walk of length $i$ starting from $x$ under a particular matching $\mc{M}$. Therefore, $|W_{+i}'(x)|\le (d-1)^i$, as we have $d-1$ choices of assignment at each step (at the first step, $x$ is banned from choosing the slot that it is matched to under $\mc{M}$).

The $B_i$ will be defined based on counting the number of ``good'' elements in $W_{+i}'(x)$. If we are considering a walk of length $i$ from $x$, and there is some walk from $x$ that lands on an unoccupied slot ($u\in U$) on the $j$th reassignment for some $1\le j\le i$, that is extremely good, so we want to properly account for this. Intuitively, we want to imagine that the walk continues for $i-j$ more steps after it hits $u$, so $u$ should count $(d-1)^{i-j}$ times as a good element of $W_{+i}'(x)$. For instance, if $x$ has one neighbor $u\in U$, we want $u$ to contribute $(d-1)^{i-1}$ dummy elements to $W_{+i}'(x)$. If there were also a different walk from $x$ that hit that same $u$ on the $j$th reassignment for some $1\le j\le i$, then the same $u$ would also contribute $(d-1)^{i-j}$ additional dummy elements, and so on.      

Formally, we accomplish this as follows: for every $i\in\mb{N}$ and $x\in X$, let the set $\mc{U}_i(x)$ be a set of dummy elements (newly-introduced elements that are not in $X$), with \[
|\mc{U}_i(x)|=\sum_{j=1}^i(\#\text{walks from $x$ that hit $U$ on the $j$th reassignment})(d-1)^{i-j}.
\]      

Then we define $W_{+i}(x)=W_{+i}'(x)\sqcup\mc{U}_i(x)$. For $S\subseteq X$, we can similarly define $W_{+i}(S)=\bigcup_{x\in S}W_{+i}(x)$. We also define $W_{+\le i}(x)=\bigcup_{j=0}^i W_{+j}(x)$ and $W_{+\le i}(S)$ analogously for $S\subseteq X$. 

Similarly, for $x\in X$ and $j\in\mb{N}$, let $W_{-j}(x)$ be defined to equal $\{w\in X:x\in\cup_{k=0}^jW_{+k}(w)\}$, that is, the set of elements that could reach $x$ in at most $j$ steps.

The \underline{BFS distance}, or \underline{distance}, of an object $w\in X$ from an object $x\in X$ under $\mc{M}$ is the minimal $i$ such that $w\in W_{+i}(x)$. We can define BFS distances involving sets in the natural way, by minimizing over elements of those sets. We can similarly define the BFS distance of a slot $y\in Y$ from an object $x\in X$ as 1 plus the BFS distance from $x$ to $N(y)$. For example, $\{h_1(x),\dotsc,h_d(x)\}$ is exactly the set of slots at BFS distance 1 from $x$. Slots with no hash functions to them (isolated vertices in the bipartite graph) can be assumed to have infinite distance.

\begin{lemma}[(Corollary 2.3 of \cite{FPS13})]\label{allbutdelt}
Let $d\ge 3$ and assume $n=cm$ for $c<c_d^*$. Then with high probability, we have that for any matching $\mc{M}$ and any $\alpha=\Theta(1)>0$, there exists $M=\Theta(1)$ such that for the unoccupied vertices $U$ of $Y$, we have that at most $\alpha n$ of the vertices of $X$ have BFS distance $>M$ to $U$.
\end{lemma}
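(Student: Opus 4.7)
The plan is a contraction-by-expansion argument. Define $T_M := \{x\in X : \operatorname{dist}_{\mc{M}}(x,U) > M\}$, the set of $X$-vertices whose BFS distance to $U$ (under $\mc{M}$) exceeds $M$. Because $T_0\supseteq T_1\supseteq\cdots$, it suffices to exhibit $M=\Theta(1)$ (depending on $\alpha,c,d$ but not on $\mc{M}$ or $n$) with $|T_M|\leq \alpha n$; the bound then persists for all larger values.

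First I would establish a structural ``closure'' observation for $T_M$. If $x\in T_M$ then $N(x)\cap U=\emptyset$ (otherwise $\operatorname{dist}(x,U)=1$), so every $y\in N(x)$ equals $\mc{M}(x'')$ for some $x''\in X$. I claim $x''\in T_{M-1}$. If $y=\mc{M}(x)$, then $x''=x\in T_M\subseteq T_{M-1}$. Otherwise, prepend the reassignment $x\to y\to x''$ to a shortest non-backtracking walk from $x''$ to $U$; as a sanity check, that shortest walk cannot start with the slot $y$, for else dropping its first reassignment would yield a strictly shorter walk. The prepend is therefore a valid non-backtracking walk, giving $\operatorname{dist}(x,U)\leq 1+\operatorname{dist}(x'',U)$ and so $\operatorname{dist}(x'',U)\geq M$. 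Hence $\mc{M}^{-1}(N(T_M))\subseteq T_{M-1}$, and since $\mc{M}$ is a bijection on matched slots,
\[
|T_{M-1}| \;\geq\; |\mc{M}^{-1}(N(T_M))| \;=\; |N(T_M)|.
\]

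The key probabilistic input is a quantitative expansion property in the subcritical regime: for every $\alpha>0$ there exists $\epsilon=\epsilon(\alpha,c,d)>0$ such that with high probability, every $W\subseteq X$ with $|W|\geq \alpha n$ satisfies $|N(W)|\geq (1+\epsilon)|W|$. This is the natural quantitative strengthening of Hall's condition, since $c_d^*$ is precisely the threshold below which the minimum Hall surplus is uniformly bounded below. It can be proven by a first-moment/union-bound summed over sizes $w\in[\alpha n,n]$, bounding the probability that a particular $w$-set has all its $d$-hash images inside a particular $(1+\epsilon)w$-set, and using the sharp threshold characterization of $c_d^*$ from \cite{xorsatthresholds,FP12,FM12}. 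Combining with the structural step, whenever $|T_M|\geq \alpha n$ we get $|T_{M-1}|\geq (1+\epsilon)|T_M|$; iterating backward from $|T_0|\leq n$, if $|T_i|\geq \alpha n$ held for all $0\leq i\leq M$ then $n\geq (1+\epsilon)^M \alpha n$, which fails for any $M>\log(1/\alpha)/\log(1+\epsilon)=\Theta(1)$.

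The main obstacle is establishing the uniform expansion $\epsilon>0$ across the full range $|W|\in[\alpha n,n]$ using only the hypothesis $c<c_d^*$. Very small sets trivially expand (nearly by a factor of $d$), but the linear-size range requires the precise location of the threshold, and the bound must hold simultaneously for every $W$ so that it transfers to an arbitrary $\mc{M}$-dependent $T_M$; this graph-theoretic (rather than matching-specific) uniformity is the substantive content, inherited from Fountoulakis--Panagiotou--Steger, whereas the contraction of $|T_M|$ is then immediate.
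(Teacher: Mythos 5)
The paper cites this statement as Corollary~2.3 of Fountoulakis--Panagiotou--Steger and does not reproduce its proof, so the comparison is between your reconstruction and theirs.

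Your structural reduction is sound. The closure step $\mc{M}^{-1}(N(T_M))\subseteq T_{M-1}$ is correct (including the sanity check that a shortest walk out of $x''=\mc{M}^{-1}(y)$ cannot begin by moving $x''$ back to its own slot $y$, so the prepended walk is a valid non-backtracking walk of length $1+\operatorname{dist}(x'',U)$). Combined with $T_M\subseteq\mc{M}^{-1}(N(T_M))$, the inequality $|T_{M-1}|\geq|N(T_M)|$ follows, and the contraction-by-expansion iteration gives $M=\Theta(1)$ once the expansion input is in hand. This is the same shape as the iteration the paper itself performs in the proof of Corollary~\ref{BFS} (using $N(D_{i+1})\subseteq D_i$), so the reduction matches the spirit of the source.

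The gap is in the stated expansion input and the sketch of its proof. You need that whp every $W\subseteq X$ with $|W|\geq\alpha n$ has $|N(W)|\geq(1+\epsilon)|W|$, and you claim this follows from a ``first-moment/union-bound summed over sizes $w\in[\alpha n,n]$ \ldots using the sharp threshold characterization of $c_d^*$.'' That bound does not close. The union-bound quantity $\binom{n}{w}\binom{m}{(1+\epsilon)w}\bigl((1+\epsilon)w/m\bigr)^{dw}$ does not tend to $0$ when $w$ is close to $n$ and $c$ is close to $c_d^*$: the factor $\binom{m}{(1+\epsilon)w}$ is far too generous a count of potential target sets, and at $w=n$ the exponent is strictly positive for, e.g., $d=3$, $c$ near $0.918$, even though the event itself is exponentially unlikely. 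This over-counting is precisely why the sharp threshold $c_d^*$ is \emph{not} established by a one-line first-moment bound on Hall violations; it requires the 2-core / second-moment machinery of the threshold papers. At the other end, a per-set concentration bound (Azuma over the $dw$ independent hash choices) with a $\binom{n}{w}$ union bound fails for small $\alpha$, because the entropy cost $w\log(n/w)$ dominates the concentration window, which scales like $w$. So a uniform $\epsilon(\alpha,c,d)$ across the full range $[\alpha n, n]$ is genuinely nontrivial, and the FPS13 proof handles different size regimes differently (small sets via Proposition~2.4-style expansion, larger sets via concentration of $|N(W)|$ around $m(1-e^{-dw/m})$ and the fact that $(1-e^{-d\rho})/\rho$ is decreasing so the worst ratio occurs at $W=X$ and is still $>1$ for $c<c_d^*$). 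You acknowledge that this uniformity is ``inherited'' from FPS13, which is fair, but the specific mechanism you sketch would not produce it; as written there is a hole where the probabilistic heart of the lemma should be.
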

Lemma \ref{allbutdelt} was first proven by the inventors of $d$-ary cuckoo hashing, but only under the weaker condition $d\ge 5+3\log(c/(1-c))$ for $n=cm$ \cite{FPSS03}. (Note logarithms are natural unless denoted otherwise.) Corollary 2.3 of \cite{FPS13} extended this lemma to all $d\ge 3$ and $c<c_d^*$. Some intuition for Lemma \ref{allbutdelt} will be given in Section \ref{changes}.

Let $\alpha>0$ be sufficiently small (but still $\Theta(1)$, to be set later) and take the corresponding $M=\Theta(1)$ as in Lemma \ref{allbutdelt}. For our starting matching $\mc{M}$, let $G$ be all vertices of $X$ of BFS distance at most $M$ from $U$. When we start at a vertex $g\in G$, we have at least a $(d-1)^{-M}$ chance that our random walk will finish in at most $M$ more steps. (That is, there is at least a $(d-1)^{-M}$ chance that our random walk will be the BFS path, which has length $\le M$.) Intuitively, this gives that the expected length on a random walk that stays inside $G$ at every time $t$ is at most $M(d-1)^M+M=\Theta(1)$ (though some technicalities arise due to the changing matching as the walk progresses). This shows intuitively that it suffices to only focus on the ``worst'' $\alpha n$ vertices for some small $\alpha=\Theta(1)>0$.

\subsection{Paper Outline}

In Section \ref{nbrs}, we will show an upper bound on the number of hashes that any set of slots receives. Iterating this will prove in Lemma \ref{nbrsofnbrs} that, with high probability, for any $j\in\mb{N}$ and any $S\subseteq X$ with $|S|\le n/12$, we have \[|W_{-j}(S)|\le\left(3d\log\left(\frac{n}{|S|}\right)\right)^j|S|.
\]

As the random walk progresses, the matching of objects to slots changes as we perform evictions. Section \ref{changes} explains why this does not present a problem for our analysis. 

Section \ref{expansion} is our longest and most technical section. It begins by giving a lower bound on the number of distinct slots hashed to by a set of objects in Lemma \ref{FPSneighbors}. Applying Lemma \ref{FPSneighbors} gives lower bounds on $|W_{+j}(S)|$ that hold for any $S\subseteq X$. This lemma proves to be a keystone of our proof, as the lower bounds on $|W_{+j}(S)|$ can be iteratively built up to give bounds on the likelihood of ending up in one set when starting from another. Intuitively, if $|W_{+j}(S)|$ is near its upper bound of $(d-1)^j|S|$, then the random walks starting in $S$ do not concentrate on any small set of vertices, which helps our analysis.

In Section \ref{chaining}, we define the sets $B_i$, where we iteratively define \[
B_i=\{x\in X:\text{at least }2(d-1)^ii^{-1.5}\text{ paths of length $i$ from $x$ end in }B_{i-1}\}.
\]In other words, if we are outside of $B_i$, we have at least $1-2i^{-1.5}$ probability that in $i$ steps we will be outside of $B_{i-1}$. So, if the random walk begins outside of $B_i$, it is likely to iteratively progress from $X\setminus B_i$ to $X\setminus B_{i-1}$, and so on, to eventually reach the $G$ of Lemma \ref{allbutdelt}. The fact that $\sum_{i=1}^\infty 2i^{-1.5}$ converges means that we can achieve an arbitrarily small constant probability of this progression failing on any step.

Lemma \ref{Bi4} from Section \ref{expansion} quickly implies that $|B_i|\le(d-1)^{-i^2/4}n$. Section \ref{chaining} continues on to show that walks starting ``sufficiently far'' from $B_i$ have probability at least 0.97 of finishing in $O(i^2)$ steps. 

Section \ref{improvedreaching} directly improves Lemma \ref{nbrsofnbrs}, with a more technical proof giving a stronger bound on $|W_{-j}(S)|$. This was not needed for anything before Section \ref{improvedreaching}, but is needed in Section \ref{finalproofsection}, which finishes the proof of $O(1)$ insertion by improving the ``with probability at least 0.97'' statement to an expected insertion time. Roughly, we can show that if a walk starting in $X\setminus B_i$ fails to finish in $O(i^2)$ steps, then we are still likely to be outside of $X\setminus B_{9i}$ and can attempt another run. 

Section \ref{tailsection} proves stronger tail bounds on the insertion time of the random walk, that is, an upper bound on the probability that the random walk will take at least $\ell$ steps. 

Section \ref{BFSSection} extends our work to show $O(1)$ insertion for BFS insertion, as well as the random walk procedure that chooses a random one of the $d$ hashes each time rather than excluding the one hash from which the object was just evicted. 

Finally, Section \ref{futureworksection} discusses possible future improvements on our results.

\section{Bounding the Number of Paths to any Set}\label{nbrs}
To show that reaching some bad set is unlikely, we want to upper bound the probability of reaching some small set, which we can later combine with a proof of bad sets being small. To accomplish this, we need to bound the number of neighbors that a small set can have.
\begin{lemma}\label{Upperboundnbrs}
For any $d\ge 3$ and $c<c_d^*$, we have with high probability that there is not a set $Z\subseteq Y$ with $|Z|\le n/12$ such that $|N(Z)|\ge 3d\log\left(\frac{n}{|Z|}\right)|Z|$.
\end{lemma}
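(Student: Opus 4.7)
The plan is to use a straightforward first-moment / Chernoff union bound over all possible ``bad'' sets $Z$, exploiting the fact that the $d$ hashes of each $x \in X$ are independent of those of the other elements.

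First, fix $Z \subseteq Y$ with $|Z| = z$. For each $x \in X$, the indicator $\mathbbm{1}[x \in N(Z)]$ depends only on $h_1(x), \dots, h_d(x)$, so these indicators are mutually independent, with $\Pr[x \in N(Z)] = 1 - (1 - z/m)^d \le dz/m$. Hence $|N(Z)|$ is stochastically dominated by $\mathrm{Bin}(n, dz/m)$, whose mean is at most $cdz$. Setting $k = \lceil 3d z \log(n/z) \rceil$, the standard binomial tail bound gives
\[
\Pr\!\left[|N(Z)| \ge k\right] \;\le\; \binom{n}{k}\left(\tfrac{dz}{m}\right)^{k} \;\le\; \left(\tfrac{en p}{k}\right)^{k} \;=\; \left(\tfrac{ec}{3 \log(n/z)}\right)^{\!3dz\log(n/z)}.
\]
The restriction $z \le n/12$ ensures $\log(n/z) \ge \log 12 > ec/3$, so the base is bounded below $1$ and the bound is genuinely small.

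Next, union bound over all $Z$ of size $z$, using $\binom{m}{z} \le (em/z)^z$, and then sum over $z \in [1, n/12]$. Taking logs, the exponent becomes $z\cdot \phi(L)$ where $L = \log(n/z)$ and
\[
\phi(L) \;=\; 1 + L - \log c + 3dL\bigl(1 + \log c - \log 3 - \log L\bigr).
\]
The key calculation is showing $\phi(L) < 0$ uniformly for $L \ge \log 12$, $d \ge 3$, and $c \in (0,1)$. The dominant term is $-3dL\log L$, which for $L \ge \log 12$ already beats the positive terms by a clear constant margin when $d \ge 3$; moreover the coefficient of $\log c$ is $3dL - 1 > 0$, so smaller $c$ only makes $\phi$ more negative. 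Concretely, this yields some $C = \Theta(1) > 1$ with $\phi(L) \le -C$ throughout the range.

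Finally, we sum $\sum_z e^{z\phi(L)}$ in two regimes. For $z \le \log n$, we have $L \ge \tfrac12 \log n$ so the $-3dL\log L$ term forces $\phi(L) \le -\Omega(\log n \log\log n)$; each summand is $n^{-\omega(1)}$ and there are only $\log n$ of them. For $\log n < z \le n/12$, the uniform bound $\phi(L) \le -C$ with $C > 1$ gives each summand at most $n^{-C}$, and multiplying by the at most $n/12$ terms still yields $o(1)$. Adding the two regimes bounds the total failure probability by $o(1)$, which is exactly the high-probability claim. The main obstacle is really the uniform lower bound $\phi(L) \le -C$ with $C > 1$ across the full range of $L$ and all admissible $c$; this is just a clean calculus check once one isolates the leading $-3dL\log L$ term, but it is where the specific constants $3d$ in the exponent and $n/12$ in the size restriction are used.
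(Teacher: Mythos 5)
Your proof is correct and follows essentially the same first-moment Chernoff plus union-bound approach as the paper, including the same split of the sum into small and large $|Z|$. The only minor difference is that the paper bounds the number of edges into $Z$ (stochastically dominated by $\mathrm{Bin}(dn, |Z|/m)$ with $|Z|/m \le |Z|/n$) rather than $|N(Z)|$ directly, which gives the cleaner per-set estimate $(|Z|/n)^{3d|Z|}$ and sidesteps the explicit $\phi(L)$ calculus you carry out.
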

\begin{proof}
First, imagine fixing $Z\subseteq Y$, then randomly choosing the edges of our graph. Let $e(Z)$ be the number of edges incident to $Z$. Our bipartite graph has $dn$ edges, and each has an independent $|Z|/m\le |Z|/n$ chance of landing in $|Z|$. Thus, $e(Z)$ is stochastically dominated by the binomial random variable $Bin(dn,|Z|/n)$, and so $\\\mb{E}(e(Z))\le d|Z|$. By standard Chernoff bounds,
\[
\mb{P}\left(e(Z)\ge 3d\log\left(\frac{n}{|Z|}\right)|Z|\right)\le\bfrac{e}{3\log(n/|Z|)}^{3d|Z|\log(n/|Z|)}\le e^{-3d|Z|\log(n/|Z|)}=\left(\frac{|Z|}{n}\right)^{3d|Z|}.
\]
Then 
\begin{align*}
\mb{P}&\left(\exists~Z\subseteq Y\text{ s.t. }|N(Z)|\ge 3d\log\left(\frac{n}{|Z|}\right)|Z|\right)\\
&\le\sum_{i=1}^{n/12}\binom mi\left(\frac{i}{n}\right)^{3di}
\le\sum_{i=1}^{n/12}\left(\frac{2en}{i}\right)^i\left(\frac{i}{n}\right)^{3di}=\sum_{i=1}^{n/12}\brac{2e\bfrac{i}{n}^{3d-1}}^i\\
&\le\sum_{i=1}^{\log^2(n)}2e\bfrac{\log^2(n)}{n}^2+\sum_{i=\log^2(n)}^{n/12}\brac{2e\bfrac{1}{12}^{2}}^{\log^2(n)}=o(1/n).
\end{align*}
\end{proof}
Now, for $x\in X$ and $j\in\mb{N}$, let $W_{-j}(x)$ be defined to equal $\{w\in X:x\in\cup_{k=0}^jW_{+k}(w)\}$, that is, the set of elements that could reach $x$ in at most $j$ steps.
\begin{lemma}\label{nbrsofnbrs}
For any $d\ge 3$ and $c<c_d^*$, we have with high probability that for any matching $\mc{M}$ for any $j\in\mb{N}$ and any $S\subseteq X$ with $|S|\le n/12$, we have $|W_{-j}(S)|\le\left(3d\log\left(\frac{n}{|S|}\right)\right)^j|S|$.
\end{lemma}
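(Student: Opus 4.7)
I plan to prove Lemma \ref{nbrsofnbrs} by induction on $j$, with Lemma \ref{Upperboundnbrs} supplying the one-step neighborhood bound. For the base case $j=0$ we have $W_{-0}(S)=S$, so the claim holds trivially. For the inductive step, let $T = W_{-j}(S)$ and observe that $W_{-(j+1)}(S) = W_{-1}(T)$. By the definition of a walk in the cuckoo process, a vertex $w$ lies in $W_{-1}(T)$ iff either $w \in T$, or one of $w$'s hashes $h_i(w)$ lies in $Z := \mc{M}^{-1}(T) \subseteq Y$, the set of slots matched to elements of $T$ under $\mc{M}$. Hence $W_{-(j+1)}(S) \subseteq T \cup N(Z)$, with $N$ denoting the bipartite-graph neighborhood. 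Moreover every matched element of $T$ is already in $N(Z)$, since it is a bipartite neighbor of its own matched slot; thus $T \setminus N(Z)$ consists only of unmatched objects of $X$, of which there is at most one (the object currently being inserted), giving $|W_{-(j+1)}(S)| \le |N(Z)| + 1$.

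Conditioning on the high-probability event of Lemma \ref{Upperboundnbrs}, as long as $|Z| \le n/12$ we obtain $|N(Z)| \le 3d\log(n/|Z|)\,|Z|$. Since the map $t \mapsto 3dt\log(n/t)$ is increasing on $(0,n/e)$ and $|Z| \le |T|$, this upgrades to $|N(Z)| \le 3d\log(n/|T|)\,|T|$; and because $S \subseteq T$ we have $|T| \ge |S|$, hence $\log(n/|T|) \le \log(n/|S|)$, giving $|N(Z)| \le 3d\log(n/|S|)\,|T|$. Combined with the inductive hypothesis $|T| \le (3d\log(n/|S|))^j\,|S|$, this yields $|W_{-(j+1)}(S)| \le (3d\log(n/|S|))^{j+1}\,|S| + 1$, and the additive $+1$ is easily absorbed since $3d\log(n/|S|) \ge 3d\log 12 \ge 12$ for $d \ge 2$ and $|S| \le n/12$ (one can equivalently enlarge $S$ by a single dummy element at the outset).

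The main technical subtlety I expect to address is the boundary case where $|T|$ (or the target bound itself) grows past $n/12$, so that Lemma \ref{Upperboundnbrs} no longer applies at the next step. This turns out to be harmless: as soon as $(3d\log(n/|S|))^{j}|S| \ge n/12$, one has $n \le 12\,(3d\log(n/|S|))^{j}|S| \le (3d\log(n/|S|))^{j+1}|S|$ (again using $3d\log(n/|S|) \ge 12$), so the trivial bound $|W_{-(j+1)}(S)| \le n$ already lies within the target, and the induction can be closed off with this trivial estimate from that point on. Apart from this bookkeeping, the argument uses Lemma \ref{Upperboundnbrs} only once per inductive step, and the ``with high probability'' in the conclusion is inherited directly from a single invocation of the high-probability event of Lemma \ref{Upperboundnbrs}.
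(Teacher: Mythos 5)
Your proof is correct and follows essentially the same inductive strategy as the paper's: apply Lemma \ref{Upperboundnbrs} to the slot-set $Z$ occupied by $W_{-(j-1)}(S)$, use $|W_{-(j-1)}(S)|\ge|S|$ to replace the logarithm, and handle the boundary $|W_{-(j-1)}(S)|\ge n/12$ by noting the target bound already exceeds $n$ there. Your treatment is in fact slightly more careful than the paper's: you explicitly account for the single unmatched object (where the paper asserts $|Z|=|W_{-(j-1)}(S)|$ outright) and for monotonicity of $t\mapsto t\log(n/t)$, neither of which changes the argument's substance.
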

\begin{proof}
We will assume that the conclusion of Lemma \ref{Upperboundnbrs} holds, as it does with high probability. We can then prove this lemma inductively as a corollary of Lemma \ref{Upperboundnbrs}.

We see that Lemma \ref{nbrsofnbrs} is true for $j=0$. Then note that $W_{-j}(S)=W_{-1}(W_{-(j-1)}(S))=N(Z)$ where $Z\subseteq Y$ is the spots occupied by $W_{-(j-1)}(S)$, which thus has the same cardinality of $W_{-(j-1)}(S)$.

So using Lemma \ref{Upperboundnbrs}, we have
\begin{multline*}
|W_{-j}(S)|\le3d\log\left(\frac{n}{|W_{-(j-1)}(S)|}\right)|W_{-(j-1)}(S)|\le 3d\log\left(\frac{n}{|S|}\right)|W_{-(j-1)}(S)|\\
\le\left(3d\log\left(\frac{n}{|S|}\right)\right)^j|S|,
\end{multline*}
as desired.

Note that if we ever have $|W_{-(j-1)}(S)|\ge n/12$ (so Lemma \ref{Upperboundnbrs} can't be applied), then we have $|W_{-j}(S)|\le 3d\log\left(\frac{n}{|S|}\right)|W_{-(j-1)}(S)|$ anyway, as the right side of the equation is then more than $n$.
\end{proof}

Lemma \ref{nbrsofnbrs} will be strong enough for our work in the next few sections, including the technical Section \ref{expansion}. After Section \ref{expansion}, in Section \ref{improvedreaching} we will prove Lemma \ref{reachingsmall}, which is a more technical improvement on Lemma \ref{nbrsofnbrs} needed for our final proof.

\section{The Changing Matching}\label{changes}
As we noted above, Fountoulakis, Panagiotou, and Steger proved the following lemma, which we will use as a black box:
\begin{lemma*}[\ref{allbutdelt} (Corollary 2.3 of \cite{FPS13})]
Let $d\ge 3$ and assume $n=cm$ for $c<c_d^*$. Then with high probability, we have that for any matching $\mc{M}$ and any $\alpha=\Theta(1)>0$, there exists $M=\Theta(1)$ such that for the unoccupied vertices $U$ of $Y$, we have that at most $\alpha n$ of the vertices of $X$ have BFS distance $>M$ to $U$.
\end{lemma*}
The above lemma comes as a corollary of their following theorem:
\begin{lemma}[(Theorem 2.2 of \cite{FPS13})]\label{contraction}
Let $d\ge 3$ and assume $n=cm$ for $c<c_d^*$. Then with high probability, there exists a $\delta>0$ such that for every $R\subseteq Y$, we have $|\{x\in X:N(x)\subseteq R\}|<(1-\delta)|R|$.
\end{lemma}
Note for comparison that $|\{x\in X:N(x)\subseteq R\}|\le|R|$ for every $R\subseteq Y$ is exactly the requirement for a matching to exist. This is essentially saying that for $c<c_d^*$, we beat Hall's bound by a constant factor for all sets as $n\rightarrow\infty$. 

In other words, you could consider the parameter of the graph $\xi=\max_{R\subseteq Y}\frac{|\{x\in X:N(X)\subseteq R\}|}{|R|}$. The definition of $c_d^*$ tells us that for any $c>c_d^*$, we with high probability have $\xi>1$,  while for any $c<c_d^*$, we with high probability have $\xi\le 1$. The theorem above says that for any $c<c_d^*$, there exists an $\epsilon'=\epsilon'(c)$ such that we with high probability have $\xi\le 1-\epsilon'$.

When we first start our random walk by inserting the $n$-th element, it is inserted into a random slot in $Y$, independent of any previous hashes or actions taken by the cuckoo hashing process when inserting the previous elements. This fact is critical to our proof. Interestingly, the paper of \cite{FPS13} does not use this fact; their result would hold true even if the initial hash of the element were adversarially chosen: 

\begin{theorem}[(Theorem 1.2; Lemma 2.7 of \cite{FPS13})]\label{FPStheorem}
Assume that we have $d\ge 3$, $c<c_d^*$, and $n=cm$. Then with high probability over the random hash functions, we have that the expected insertion time for the random walk insertion process is $O(\log^{1+b_d}(n))$, where $b_d=\frac{d+\log(d-1)}{(d-1)\log(d-1)}$.
\end{theorem}
It is useful for us to use as a black-box that we can have $O(poly\log n)$ insertion time even when starting from an arbitrary starting hash in the graph. In other words, at any point in the insertion process, conditioned on any prior events in the insertion process (and still assuming the ``with high probability'' facts about the underlying graph), the expected time from that time until the random walk finishes is $O(\log^{1+b_d}(n))$. $b_d\le 3$ for all $d\ge 3$, so this is $O(\log^4(n))$.  

This proof that $O(\log n)$ expected insertion time from a given step holds conditioned on any prior events also can be shown to follow from our work here: Lemma \ref{expsmallalli} will show that with high probability it is true that for any matching (even one that might have been changed over the course of the walk) that our bad set $B_i$ has that $B_i=\emptyset$ when $i=C'\sqrt{\log{n}}$ with a sufficiently high constant $C'$. It is more notationally convenient to explain away the changes to the matching now, so we can hereafter treat the matching as fixed.

As the random walk progresses, the matching changes from $\mc{M}$, as some elements are moved to different spots. Again picturing $\mc{M}$ as assigning directions to the edges of the bipartite graph, as we only change the direction of edges that we move along, we see that the only time that the change in the matching may affect our random walk might need to worry about the changing matching is if the walk cycles; that is, if an object $x\in X$ is reached twice by the random walk.

For this purpose, we will define a special set $\mc{C}\subseteq X$, which we can think of as the vertices near cycles significantly shorter than $\log(n)$. Formally, let $z=(10\log(n))^{0.9}$ and let $S_{Cyc}\subseteq X$ be the set of vertices who are on a cycle of length $z$ or less. Then we define $\mc{C}=W_{-z}(S_{Cyc})$.

\begin{lemma}\label{fewcycles}
For any $d\ge 3$ and $c<c_d^*$, we have with high probability over the choice of random hashes that $|\mc{C}|<n^{0.3}$.
\end{lemma}
Before proving this lemma, we will explain how it allows us to deal with changes to $\mc{M}$. If we start inside $\mc{C}$, we will simply use the $O(poly\log n)$ bound. Since the probability of starting in $\mc{C}$ is at most $n^{-0.7}$, this adds an $O(1)$ factor to our expected run time.

Similarly, we will show in Corollary \ref{tailbounds} that, conditioned on staying outside of $S_{Cyc}$, the probability of taking more than $z$ steps is $O(z^{-5})\le O((\log n)^{-4})$ as well. So, if the random walk starting outside of $\mc{C}$ reaches $z$ steps in length, then we can again revert to the $O(\log^4 n)$ bound while only adding an $O(1)$ factor to the expected run time.

Therefore, this subsection shows that we do not need to worry about any changes to the matching from $\mc{M}$, as the cases that remain to be proven only include cases that do not involve any cycling in the random walk. So, for the rest of this paper, we can consider the cuckoo hashing insertion procedure to be a random walk on the fixed directed graph given by $\mc{M}$.

\begin{proof}[Proof of Lemma \ref{fewcycles}]
Fix $\ell\in\mb{N}$ and consider the cycles of length $2\ell$ in the bipartite graph. Each has the form $(x_1,y_1,x_2,y_2,\dotsc,x_\ell,y_\ell)$ for some $x_1,\dotsc,x_\ell\in X$ and $y_1,\dotsc,y_\ell\in Y$, where $x_i$ hashes to both $y_i$ and $y_{i-1}$ (with $x_1$ also hashing to $y_\ell$). There are at most $n^\ell m^\ell$ ordered sets of vertices $(x_1,y_1,x_2,y_2,\dotsc,x_\ell,y_\ell)$. The probability that all required hashes will be chosen is at most $\left(\frac{d(d-1)}{m^2}\right)^\ell\le d^{2\ell}m^{-2\ell}$. Thus, the expected number of cycles of length $2\ell$ in the bipartite graph is at most $n^\ell m^\ell d^{2\ell}m^{-2\ell}<d^{2\ell}$.

Then the number of cycles of length at most $z$ is at most $\\ \sum_{\ell=1}^{z/2} d^{2\ell}\le d^{z+1}= o(d^{\log(n)/(100d)})= o(n^{0.1})$. Markov's inequality gives that with high probability there are less than $n^{0.1}$ cycles of length at most $z$.

Each of these cycles has at most $z$ vertices on it, so $|S_{Cyc}|<n^{0.1}z<n^{0.2}$ for sufficiently large $n$. 

Then we apply Lemma \ref{nbrsofnbrs} to say that \[|\mc{C}|\le \left(3d\log\left(\frac{n}{|S_{Cyc}|}\right)\right)^z|S_{Cyc}|<\left(3d\log(n)\right)^zn^{0.2}<n^{0.3}.\]
\end{proof}

\section{Expansion from any vertex set}\label{expansion}
\subsection{Lower bounds on $|W_{+j}(S)|$}
The following lemma was proven by Fountoulakis, Panagiotou, and Steger: 

\begin{lemma}[(Proposition 2.4 of \cite{FPS13})]\label{FPSneighbors}
Let $d\ge 3$ and $c<c_d^*$. For any $1\le s<|X|/d$, define \[
p_s=\begin{cases}0&\text{if }s\le\log\log(n)\\\frac{\log_d((d-1)e^d)}{\log_d(|X|/(ds))}&\text{if }\log\log(n)\le s\le|X|/d\end{cases}
\] With high probability, we have that for all $S\subseteq X$ with $|S|<|X|/d$ that \[|N(S)|\ge (d-1-p_{|S|})|S|.\]
\end{lemma}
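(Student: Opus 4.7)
The plan is a first-moment / union-bound argument. For each $s\in\{1,\dotsc,n/d\}$, set $k_s=\lfloor(d-1-p_s)s\rfloor$. A fixed $S\subseteq X$ of size $s$ has $ds$ independent uniform hashes into $Y$, so $|N(S)|\le k_s$ forces all $ds$ endpoints to lie in some $k_s$-subset of $Y$, an event of probability at most $\binom{m}{k_s}(k_s/m)^{ds}$. A union bound over $S$ gives
$$\Pr[\text{lemma fails}]\le\sum_{s=1}^{n/d}\binom{n}{s}\binom{m}{k_s}\left(\frac{k_s}{m}\right)^{ds},$$
and it suffices to show this is $o(1)$, which I would do by splitting into two regimes according to the definition of $p_s$.

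In the small-$s$ regime $s\le\log\log n$, where $p_s=0$ and $k_s=(d-1)s-1$, the exponent of $1/m$ in the summand becomes $ds-k_s=s+1$; this extra ``$+1$'' is the key to the whole argument. Using the sharper bound $\binom{m}{k_s}\le m^{k_s}/k_s!$ together with Stirling's approximation to extract the factorial, the $s$-th summand reduces to roughly $(e^d(d-1)c)^s\cdot s/m$. Even when $e^d(d-1)c\ge 1$, summing over $s\le\log\log n$ yields at most $(\log n)^{O(1)}/m=o(1)$, since $m$ is polynomial in $n$.

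In the larger regime $\log\log n\le s\le n/d$, I would instead use $\binom{m}{k}\le(em/k)^k$ and take the logarithm of the $s$-th summand. Writing $\ell:=\log_d(n/s)$ and using $n=cm$, straightforward algebra shows the log of the summand divided by $s$ simplifies to
$$1+(d-1-p_s)+(1+p_s)\log\bigl(c(d-1-p_s)\bigr)-p_s\ell\log d.$$
The choice $p_s=\log_d((d-1)e^d)/(\ell-1)$ is calibrated precisely so that the final term $-p_s\ell\log d=-(d+\log(d-1))\ell/(\ell-1)$ dominates the remaining $O(1)$ terms with a uniformly negative margin of order $1/(\ell-1)$, yielding geometric decay of the summand and hence an $o(1)$ sum over this range.

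The main obstacle is the small-$s$ regime: a naive application of $\binom{m}{k}\le(em/k)^k$ there is insufficient, and one must exploit $1/k!$ via Stirling to obtain the crucial extra $1/m$ factor. Beyond that, some care is needed at the transition $s\approx\log\log n$ to ensure the two estimates stitch together, and near $s\approx n/d$, where $p_s$ diverges and the inequality becomes vacuous (since $d-1-p_s$ drops below zero), the summands handle themselves once one notes that the lemma's conclusion is trivial whenever $k_s\le 0$.
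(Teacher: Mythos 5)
Your proposal is correct, and the first-moment/union-bound strategy you describe is essentially the same one used in Proposition 2.4 of \cite{FPS13}, which the paper only cites rather than re-proves; the paper's own Section \ref{dis5} (Lemma \ref{improved}) sharpens the same union bound via Stirling numbers of the second kind and a ``minimal failing set'' reduction in order to get the smaller constants $a_d$ needed for $d=4,5$, but those refinements are not required for the weaker constant $(d-1)e^d$ stated here. Your small-$s$ treatment via $\binom{m}{k}\le m^k/k!$ and Stirling to extract the crucial $1/m$ is right, and the per-$s$ log-summand formula you write in the large-$s$ regime is correct. However, your explanation of why that quantity is uniformly negative is misstated: you attribute it to a ``uniformly negative margin of order $1/(\ell-1)$,'' but that alone would not yield geometric decay, since for $s$ near $\log\log n$ one has $\ell-1\approx\log_d n$ and a margin of order $1/\log n$ would make the summand $\exp(-O(s/\log n))\approx 1$, so the sum would not be $o(1)$. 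The real margin is the $\Theta(1)$ quantity $\log c<0$ that survives after the $d+\log(d-1)$ terms cancel: substituting $p_s\ell\log d=(d+\log(d-1))\bigl(1+\frac{1}{\ell-1}\bigr)$ into your own formula shows the per-$s$ log-summand is at most $\log c-(d+\log(d-1))/(\ell-1)\le\log c<0$. Equivalently, the $s$-th summand is at most $\Phi^s$ with
\[
\Phi=\frac{c(d-1-p_s)}{d-1}\left(\frac{c(d-1-p_s)}{ed}\right)^{p_s}\le c<1,
\]
since the first factor is at most $c$ and the second is at most $1$ whenever $d-1-p_s>0$; this closed form makes the geometric decay transparent without any asymptotic hand-waving about $\ell$.
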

Some facts to note here are that $p_s\ge 0$ and $p_s$ is monotonically increasing with $s$. Also, for $s\le |X|/(e^{1000d})$, which we will generally be able to assume by Lemma \ref{allbutdelt}, we have $p_s<0.01$.   

Fountoulakis, Panagiotou, and Steger applied Lemma \ref{FPSneighbors} iteratively, and the term in the numerator of Lemma \ref{FPSneighbors} ended up becoming the exponent of the logarithm in their $O(poly\log n)$ run-time bound.

Because it is so critical for our paper, we provide a proof of Lemma \ref{FPSneighbors} here, reproducing the proof of Fountoulakis, Panagiotou, and Steger \cite{FPS13}.

\begin{proof}
For a given $S\subseteq X$ and $T\subseteq Y$, we have that \[\mb{P}(N(S)\subseteq T)=\left(\frac{|T|}{m}\right)^{d|S|}.
\]
Fix an $s$ such that $1\le s\le n$. For there to be an $S\subseteq X$ with $|S|=s$ that fails the lemma, it must have $N(S)\subseteq T$ for a $T$ with $|T|=(d-1-p_s)s$. So, for $s\ge\log\log(n)$, we have
\begin{align*}
\mb{P}(\exists~S\subseteq X\text{ with }|S|=s\text{ failing the lemma})&\le\binom ns\binom{m}{(d-1-p_s)s}\left(\frac{(d-1-p_s)s}{m}\right)^{ds}
\\&\le\left(\frac{ne}{s}\right)^s\left(\frac{em}{(d-1-p_s)s}\right)^{(d-1-p_s)s}\left(\frac{(d-1-p_s)s}{m}\right)^{ds}
\\&\le\left(\frac{cs^{p_s}e^{(d-p_s)}(d-1-p_s)^{(1+p_s)}}{m^{p_s}}\right)^s
\\&\le\left(\left(\frac{s(d-1)}{em}\right)^{p_s}(ce^d(d-1))\right)^s
\\&\le c^s,
\end{align*}where we recall that the load factor $c$ satisfies $c<c_d^*<1$. For $s\le\log(m)/(d^2)$, we can note that for $S$ to fail the lemma we must in fact have $|T|<(d-1)s$, so as $(d-1)s$ is an integer we need $|T|\le(d-1)s-1=(d-1-1/s)s$. So, applying the above process with $1/s$ in the place of $p_s$, we have for $s\le\log(m)/(d^2)$ that \begin{align*}
\mb{P}(\exists~S\subseteq X\text{ with }|S|=s\text{ failing the lemma})&\le\binom ns\binom{m}{(d-1-1/s)s}\left(\frac{(d-1-1/s)s}{m}\right)^{ds}
\\&\le\left(\left(\frac{s(d-1)}{em}\right)^{1/s}(ce^d(d-1))\right)^s
\\&\le\left(\frac{\log(m)}{edm}\right)\left(ce^d(d-1)\right)^{\log(m)/(d^2)}
\\&\le\left(\frac{\log(m)(d-1)}{em}\right)2^{\log(m)}
\\&\le o(m^{-0.2})
\end{align*}
Then summing over all $s$, the probability that there exists some $S\subseteq X$ that fails our lemma is at most \[
\sum_{s=1}^{\log(m)/(d^2)}o(m^{-0.2})+\sum_{s=\log(m)/(d^2)}^{cm}c^s\le o(m^{-0.19})+O(m^{log(c)/(d^2)}).
\]So, with high probability there is no such $S$, as desired.
\end{proof}  

For our purposes, we want to give a lower bound on $|W_{+j}(S)|$ using this lemma. Note that we have a natural upper bound of $|W_{+j}(S)|\le (d-1)|W_{+(j-1)}(S)|\le(d-1)^j|S|$.
\begin{lemma}\label{basicWjLower}
Let $d\ge 3$, $c<c_d^*$, and $p_s$ be defined as in Lemma \ref{FPSneighbors}. With high probability, for every matching $\mc{M}$ and any $S\subseteq X$ and $j\in\mb{N}$ with $|S|<(d-1)^{-j}n$, we have that $|W_{+j}(S)|\ge(d-1-p_{(d-1)^j|S|})^{j-1}(d-2-p_{(d-1)^j|S|})|S|$
\end{lemma}
\begin{proof}

Note that $|W_{+\le j}(S)|=|N(W_{+\le (j-1)}(S))|$, as $W_{+\le j}(S)\subseteq X$ is exactly the set of elements that fill the slots in $Y$ that are neighbors of $W_{+\le (j-1)}(S)\subseteq X$. Therefore, we can apply Lemma \ref{FPSneighbors} to say that 
\[|W_{+\le j}(S)|=|N(W_{+\le (j-1)}(S))|\ge (d-1-p_{|W_{+\le (j-1)}(S)|})|W_{+\le (j-1)}(S)|.\]
Applying this inductively gives
\[|W_{+\le j}(S)|\ge |S|\prod_{k=0}^{j-1}(d-1-p_{(d-1)^{k+1}|S|})\ge(d-1-p_{(d-1)^j|S|})^j|S|,\]
using that $p_s$ is monotonically increasing with $s$. Then,
\begin{align}
 |W_{+j}(S)|&\ge|W_{+\le j}(S)|-|W_{+\le (j-1)}(S)|\nonumber\\
 &\ge(d-1-p_{(d-1)^j|S|})|W_{+\le (j-1)}(S)|-|W_{+\le (j-1)}(S)|\nonumber\\
 &\ge(d-2-p_{(d-1)^j|S|})|W_{+\le (j-1)}(S)|\label{add}\\
 &\ge(d-2-p_{(d-1)^j|S|})(d-1-p_{(d-1)^{j-1}|S|})^{j-1}|S|\nonumber\\
 &\ge(d-2-p_{(d-1)^j|S|})(d-1-p_{(d-1)^j|S|})^{j-1}|S|\nonumber
\end{align}
as desired.
\end{proof}
Lemma \ref{basicWjLower} essentially gives that $|W_{+j}(S)|$ is within a constant factor, $\frac{d-2}{d-1}$, of the upper bound of $|W_{+j}(S)|\le(d-1)^j|S|$. Lemma \ref{advancedWjLower} then shows that relatively little of this loss appears at higher $j$.
\begin{lemma}\label{advancedWjLower}
Let $p_s$ be defined as in Lemma \ref{FPSneighbors}. For any $S\subseteq X$ and any $j\in\mb{N}$ with $|S|<(d-1)^{-j}n/e^{1000d}$, we have that \[|W_{+j}(S)|\ge (d-1)|W_{+(j-1)}(S)|-|S|-2.1p_{(d-1)^j|S|}|W_{+(j-1)}(S)|.\]
\end{lemma}
\begin{proof}
As in the proof of Lemma \ref{basicWjLower}, we start from \[|W_{+\le j}(S)|=|N(W_{+\le (j-1)}(S))|\ge (d-1-p_{|W_{+\le (j-1)}(S)|})|W_{+\le (j-1)}(S)|.\]
In other words, this means that if you take any ordering of the $d|W_{+\le (j-1)}(S)|$ hashes leaving $W_{+\le (j-1)}(S)$, there are at    most $(1+p_{|W_{+\le (j-1)}(S)|})|W_{+\le (j-1)}(S)|$ hashes that hit a table slot already hit by another hash in $W_{+\le (j-1)}(S)$, which we could call repeat hashes.

Considering ordering the hashes where the ones from $S$ come first, then the ones from $W_{+1}(S)$, and so on until $W_{+(j-1)}(S)$. We see that for every $1\le k<j-1$, by the definition of $W_{+k}(S)$, every object in $W_{+k}(S)$ sends at least one of its $d$ hashes into a slot occupied by an element of $W_{+(k-1)}(S)$, giving a repeat at every element in $W_{+k}(S)$. That shows that from $W_{+1}(S)$ to $W_{+(j-2)}(S)$ we have \[\ge\bigcup_{k=1}^{j-2}|W_{+k}(S)|\ge|W_{+\le (j-1)}(S)|-|W_{+(j-1)}(S)|-|S|\] repeats.

Therefore, the number of the $d|W_{+(j-1)}(S)|$ hash functions from $W_{+(j-1)}(S)$ that can go to a slot already hashed to is at most the number of repeats remaining, which is at most \begin{multline*}
    (1+p_{|W_{+\le (j-1)}(S)|})|W_{+\le (j-1)}(S)|-(|W_{+\le (j-1)}(S)|-|W_{+(j-1)}(S)|-|S|)\\=|W_{+(j-1)}(S)|+|S|+p_{|W_{+\le (j-1)}(S)|}|W_{+\le (j-1)}(S)|.
\end{multline*} Each new slot that is hashed to gives a corresponding element of $W_{+j}(S)$, so 
\begin{align*}
|W_{+j}(S)|&\ge d|W_{+(j-1)}(S)|-(|W_{+(j-1)}(S)|+|S|+p_{|W_{+\le (j-1)}(S)|}|W_{+\le (j-1)}(S)|)\\&\ge(d-1)|W_{+(j-1)}(S)|-|S|-p_{|W_{+\le (j-1)}(S)|}|W_{+\le (j-1)}(S)|\\
&\ge(d-1)|W_{+(j-1)}(S)|-|S|-p_{(d-1)^j|S|}|W_{+\le (j-1)}(S)|\\
&\ge(d-1)|W_{+(j-1)}(S)|-|S|-p_{(d-1)^j|S|}(|W_{+(j-1)}(S)|+|W_{+\le j-2}(S)|)\\
&\ge(d-1)|W_{+(j-1)}(S)|-|S|-p_{(d-1)^j|S|}\brac{|W_{+(j-1)}(S)|+\frac{|W_{+(j-1)}(S)|}{d-2-p_{(d-1)^{j-1}|S|}}},
\\&\say{using \eqref{add}}\\
&\ge(d-1)|W_{+(j-1)}(S)|-|S|-p_{(d-1)^j|S|}\brac{|W_{+(j-1)}(S)|+\frac{|W_{+(j-1)}(S)|}{d-2.01}},
\\&\say{ using $|W_{+(j-1)}(S)|\le(d-1)^{j-1}|S|<|X|/(e^{1000d})$}\\
&\ge(d-1)|W_{+(j-1)}(S)|-|S|-2.1p_{(d-1)^j|S|}|W_{+(j-1)}(S)|
\end{align*}
as desired.
\end{proof}

\subsection{Avoiding small sets through expansion}
Now that we have proven lower bounds on $|W_{+j}(S)|$ for any set, we will now use these bounds to go in a different direction and show that for a set $S$, relatively few elements will have many paths to $S$. In other words, we create sets $B_j(S)$, which consist of elements $x\in X$ from which we have a high likelihood of being in $S$ after $j$ steps.

Unlike bounding $|W_{-j}(S)|$, which includes all objects that have at least one path of length $j$ to $S$, we will only include objects which have at least some fraction of their paths of length $j$ reaching $S$. Correspondingly, while $|W_{-j}(S)|$ must grow with $j$, when we set the parameters here, we will see that $|B_j(S)|$ will actually shrink with $j$; if $j$ is large, there are very few objects that, as a start point, have a high likelihood of being in $S$ after a $j$ steps. You can imagine this as saying something about the mixing of our random walk, as it shows the random walk is unlikely to concentrate on a small set after a while.

A sketch of the basic argument goes like this: if you take a set $Q$, the lemmas in the previous section show that $|W_{+j}(Q)|$ is large. In particular, this means that there are many distinct endpoints for a walk of length $j$ starting in $Q$. Only $|S|$ of the distinct endpoints are in $S$, so if $|W_{+j}(Q)|\gg|S|$, then it is unlikely for a walk of length $j$ starting in $Q$ to end up in $S$. For this to work, it is also necessary to know how many of the walks could concentrate on the same endpoints.

The next four lemmas will iteratively bootstrap off each other to get better bounds on the likelihood of ending in $S$, with new definitions for $B_j^{(1)}(S)$, $B_j^{(2)}(S)$, $B_j^{(3)}(S)$, and $B_j^{(4)}(S)$.

Formally now: for any set $S\subseteq X$ and a given matching $\mc{M}$, define $B_i^{(1)}(S)\subseteq X$ as follows. An object $x$ is in $B_i^{(1)}(S)$ if and only if at least 1/4 of the $(d-1)^{\log^2(i)}$ paths of length $\log^2(i)$ from $x$ end in an object $z\in W_{+\log^2(i)}(x)$ such that at least $\frac{150}{i}$ proportion of the $(d-1)^{i-\log^2(i)}$ paths of length $i-\log^2(i)$ starting at $z$ end in $S$.

In other words, we define \[
S'_1=\{z\in X:\ge 150(d-1)^{i-\log^2(i)}/i\text{ paths of length }i-\log^2(i)\text{ from }x\text{ end in }S\}
\]and then \[
B_i^{(1)}(S)=\{x\in X:\ge (d-1)^{\log^2(i)}/4\text{ paths of length }\log^2(i)\text{ from }x\text{ end in }S'_1\}.
\]
\begin{lemma}\label{Bi1}
For any $d\ge 3$ and $c<c_d^*$, we have the following with high probability that for any matching $\mc{M}$: for any $S\subseteq X$ and $i\in\mb{N}$ with $i\ge C_1$ for some $C_1=\Theta(1)$ and $|S|\le(d-1)^{-i^2/5}n$, we have that $|B_i^{(1)}(S)|<(d-1)^{-.9i}|S|$.
\end{lemma}
\begin{proof}
Let $Q\subseteq X$ such that $|Q|=(d-1)^{-.9i}|S|$. We will prove that there must exist some $x\in Q$ such that $x\notin B_i^{(1)}(S)$, therefore proving that no such $Q$ can equal $B_i^{(1)}(S)$ and thus $|B_i^{(1)}(S)|<(d-1)^{-.9i}|S|$.

In fact, we will show this by showing that, starting from a uniformly random point in $x\in Q$, there is at least probability $\ge\frac 14$ that after $\log^2(i)$ steps, we are at a point $z$ such that more than $1-\frac{150}{i}$ proportion of the $(d-1)^{i-\log^2(i)}$ paths of length $i-\log^2(i)$ do not end in $S$.

First, we note that
\begin{align*}
|W_{+\log^2(i)}(Q)|&\ge (d-1-p_{(d-1)^{\log^2(i)}|Q|})^{\log^2(i)-1}(d-2-p_{(d-1)^{\log(i)}|Q|})|Q|\say{by Lemma \ref{basicWjLower}}\\
&\ge\left(d-1-\frac{\log_d((d-1)e^d)}{\log_d((d-1)^{i^2/5}/d)}\right)^{\log^2(i)-1}\left(d-2-\frac{\log_d((d-1)e^d)}{\log_d((d-1)^{i^2/5}/d)}\right)|Q|,\\
&\say{as $\log^2(i)|Q|<|S|\le(d-1)^{-i^2/5}n$}\\
&\ge\left(d-1-\frac{15(d-1)}{i^2}\right)^{\log^2(i)-1}\left(d-2-0.1\right)|Q|,\\&\say{using $15(d-1)\ge\frac{5\log_d((d-1)e^d)}{\log_d(d-1)}$ for all $d\ge 3$}\\
&\ge(d-1)^{\log^2(i)}\left(\frac{d-2.1}{d-1}\right)\left(1-\frac{15(\log^2(i)-1)}{i^2}\right)|Q|\\
&\ge  (d-1)^{\log^2(i)}|Q|/3.\nonumber
\end{align*}
Then, from here we note that for every $\log^2(i)\le j\le i$, we have that \begin{align*}
|W_{+j}(Q)|&\ge (d-1)|W_{+(j-1)}(Q)|-|Q|-2.1p_{(d-1)^j|Q|}|W_{+(j-1)}(Q)|\say{By Lemma \ref{advancedWjLower}}
\\&\ge (d-1)|W_{+(j-1)}(Q)|-\frac{3|W_{+\log^2(i)}(Q)|}{(d-1)^{\log^2(i)}}-2.1p_{(d-1)^{i-i^2/5}n}|W_{+(j-1)}(Q)|\\&\ge (d-1)|W_{+(j-1)}(Q)|-\frac{|W_{+(j-1)}(Q)|}{i^2}-2.1\frac{\log_d((d-1)e^d)}{(i^2/5-i)\log_d(d-1)-1}|W_{+(j-1)}(Q)|\\&\ge\left(d-1-\frac{2.1(15(d-1))}{i^2}\right)|W_{+(j-1)}(Q)|,
\end{align*}
again using that $15(d-1)>\frac{5\log_d((d-1)e^d)}{\log_d(d-1)}$ for all $d\ge 3$.   

Applying this iteratively, we get that \begin{multline*}
    |W_{+i}(Q)|\ge\left(d-1-\frac{32(d-1)}{i^2}\right)^{i-\log^2(i)}|W_{+\log^2(i)}(Q)|\\
    \ge(d-1)^{i-\log^2(i)}\left(1-\frac{32}{i^2}\right)^i|W_{+\log^2(i)}(Q)|
    \ge(d-1)^{i-\log^2(i)}\left(1-\frac{32}{i}\right)|W_{+\log^2(i)}(Q)|
\end{multline*}
This tells us that for any ordering of the $(d-1)^{i-\log^2(i)}|W_{+\log^2(i)}(Q)|$ walks of length $i-\log^2(i)$ leaving $Q$, at most $32/i$ proportion of them end at an object that was also the endpoint of a previous path, that is, there are at most $(d-1)^{i-\log^2(i)}\left(\frac{32}{i}\right)|W_{+\log^2(i)}(Q)|$ repeats.
Now, 
\[|S|= (d-1)^{0.9i}|Q|\le (d-1)^{-.05 i}(d-1)^{i-\log^2(i )}|Q|.\]
This implies that of the $(d-1)^{i-\log^2(i)}|W_{+\log^2(i)}(Q)|$ paths of length $i-\log^2(i)$ from $W_{+\log^2(i)}(Q)$, at most a $\left(\frac{32}{i}+(d-1)^{-0.5i}\right)\le\frac{33}{i}$ proportion end in $S$, as if we order these paths we can have at most $|S|$ ones hit $S$ for the first time, plus the repeats. Then, the Markov inequality tells us that less than $\frac 14$ of the  elements in $W_{+\log^2(i)}(Q)$ have at least a $\frac{150}{i}$ proportion of their paths ending in $S$, and thus (recalling the definition of $S'_1$ before the start of Lemma \ref{Bi1}), \[|S'_1\cap W_{+\log^2(i)}(Q)|<|W_{+\log^2(i)}(Q)|/4\]
so then \[|W_{+\log^2(i)}(Q)\cap(X\setminus S'_1)|>(3/4)|W_{+\log^2(i)}(Q)|\ge (d-1)^{\log^2(i)}|Q|/4.\]
This finishes the proof, as it then must be true that more than $1/4$ of the $(d-1)^{\log^2(i)}|Q|$ paths of length $\log^2(i)$ leaving $Q$ must not end up in $S'_1$, meaning that $Q$ cannot be $B_i^{(1)}(S)$.
\end{proof}

Now, we proceed to the second of four lemmas, where we can replace the two step ``constant probability of having $\frac{150}{i}$ probability of being in $S$'' with a pure $\frac{200}{i}$ probability of being in $S$.

For any set $S\subseteq X$, define $B_i^{(2)}(S)\subseteq X$ under a given matching $\mc{M}$ as follows. An object $x$ is in $B_i^{(2)}(S)$ if and only if at least $\frac{200}{i}$ of the $(d-1)^i$ paths of length $i$ starting at $x$ end in $S$.

In other words, you could define \[B_i^{(2)}(S)=\{x\in X:\ge {200}(d-1)^i/i\text{ paths of length }i\text{ from }x\text{ end in }S\}.\]

\begin{lemma}\label{Bi2}
For any $d\ge 3$ and $c<c_d^*$, we have the following with high probability that for any matching $\mc{M}$: for any $S\subseteq X$ and $i\in\mb{N}$ with $i\ge C_2$ for some $C_2=\Theta(1)$ and $(d-1)^{-i^{10}}n<|S|<(d-1)^{-i^2/4.5}n$, we have that $|B_i^{(2)}(S)|<(d-1)^{-.8i}|S|$.\\

\end{lemma}
\begin{proof}
We claim that $B_i^{(2)}(S)\subseteq W_{-\log^4(i)}(B_{i-\log^4(i)}^{(1)}(W_{-\log^4(i)}(S)))$. We first show that this suffices to complete the proof, as we show $|W_{-\log^4(i)}(B_{i-\log^4(i)}^{(1)}(W_{-\log^4(i)}(S)))|<(d-1)^{-.8i}|S|$. First note that
\[
|W_{-\log^4(i)}(S)|\le\left(3d\log\left(\frac{n}{|S|}\right)\right)^{\log^4(i)}|S|\le\left(3di^{10}\right)^{\log^4(i)}(d-1)^{-i^2/4.5}n\le(d-1)^{-i^2/5}n
\]
for $i\ge C_2$, by Lemma \ref{nbrsofnbrs}. So
\begin{align*}
|B_{i-\log^4(i)}^{(1)}(W_{-\log^4(i)}(S)))|&\le(d-1)^{-0.9(i-\log^4(i))}|W_{-\log^4(i)}(S)|\say{by Lemma \ref{Bi1}}\\&\le(d-1)^{-0.85i}\left(3d\log\left(\frac{n}{|S|}\right)\right)^{\log^4(i)}|S|\say{by Lemma \ref{nbrsofnbrs}}
\\&\le (d-1)^{-0.85i}\left(3di^{10}\log(d-1)\right)^{\log^4(i)}|S|\say{as $(d-1)^{-i^{10}}n<|S|$}
\\&\le (d-1)^{-0.82i}|S|
\end{align*}
and thus 
\begin{align*}
&|W_{-\log(i)^4}(B_{i-\log^4(i)}^{(1)}(W_{-\log^4(i)}(S)))|\\
&\le\left(3d\log\left(\frac{n}{|B_{i-\log^4(i)}^{(1)}(W_{-\log^4(i)}(S))|}\right)\right)^{\log^4(i)}|B_{i-\log^4(i)}^{(1)}(W_{-\log^4(i)}(S))|,\quad\text{by Lemma \ref{nbrsofnbrs}}\\
\\&\le\left(3d\log\left(\frac{n}{(d-1)^{-0.82i}|S|}\right)\right)^{\log^4(i)}(d-1)^{-0.82i}|S|
\\&\le\left(3d(i^{10}+0.82i)\log(d-1)\right)^{\log^4(i)}(d-1)^{-0.82i}|S|
\\&\le (d-1)^{-0.8i}|S|
\end{align*}
as desired.

Now, let $x\notin W_{-\log^4(i)}(B_{i-\log^4(i)}^{(1)}(W_{-\log^4(i)}(S)))$, and we will prove that $x\notin B_i^{(2)}(S)$.

Let $x_2$ be the position that we reach after $\log^2(i-\log^4(i))$ steps. Because\\
 $x\notin B_{i-\log^4(i)}^{(1)}(W_{-\log^4(i)}(S))$, there is at least a 1/4 chance that $x$ has the property that there is at least $1-150/i$ chance that we will be outside of $W_{-\log^4(i)}(S)$ in a further $(i-\log^4(i))-\log^2(i-\log^4(i))$ steps.

Also, because $x\notin W_{-\log^4(i)}(B_{i-\log^4(i)}^{(1)}(W_{-\log^4(i)}(S)))$, we know for sure that \\$x_2\notin W_{-(\log^4(i)-\log^2(i-\log^4(i)))}(B_{i-\log^4(i)}^{(1)}(W_{-\log^4(i)}(S)))$. Then if the 1/4 probability event does not occur, we still have that after a further $\log^2(i-\log^4(i))$ steps from $x_2$, there is again at least a 1/4 chance that we are at a point $x_3$ such that with probability $1-150/i$ we will be outside of $W_{-\log^4(i)}(S)$ in a further $(i-\log^4(i))-\log^2(i-\log^4(i))$ steps from $x_3$.

In this way, we see that we can iterate, and then for any $k\in\mb{N}$ such that $k\log^2(i-\log^4(i))<\log^4(i)$, it is true that after $k\log^2(i-\log^4(i))$ steps, we have probability at least $1-\left(\frac 34\right)^k$ to reach a point $x'$ such that at least a $1-\frac{150}{i}$ fraction of paths from $x'$ are outside $W_{-\log^4(i)}(S)$ in a further $(i-\log^4(i))-\log^2(i-\log^4(i))$ steps from $x'$.

We plug in $k=\log^2(i)$. Then with probability $\ge 1-\left(\frac 34\right)^{\log^2(i)}\ge 1-\frac{1}{i}$, in the first $\log^4(i)$ steps we have that we reach a point $x'$ such that at least a $1-\frac{150}{i}$ fraction of paths from $x'$ are outside $W_{-\log^4(i)}(S)$ in a further $(i-\log^4(i))-\log^2(i-\log^4(i))$ steps from $x'$. This shows that if we do reach such an $x'$, then conditioned on reaching that $x'$ we have at least a $1-\frac{150}{i}$ of being outside of $S$ after exactly $i$ steps from our initial $x$ (as we reach $x'$ after a number of steps between $\log^2(i-\log^4(i))$ and $\log^4(i)$; and then a further $(i-\log^4(i))-\log^2(i-\log^4(i))$ steps later we are likely to be outside of $W_{-\log^4(i)}(S)$, meaning outside of $S$ after $i-k\log^2(i-\log^4(i))$ steps from $x'$ for every $1\le k\le \log^2(i)$ as desired).

Since the probability of reaching such an $x'$ is at least $1-\frac{1}{i}$, we have probability at least $\left(1-\frac{150}{i}\right)\left(1-\frac{1}{i}\right)\ge1-\frac{200}{i}$ of being outside of $S$ after exactly $i$ steps from our initial $x$. Therefore, $x\notin B^{(2)}_i(S)$, as desired.
\end{proof}

We defined $B_i^{(2)}(S)$ to have $200/i$ probability of hitting $S$. Now, we want to strengthen this result by reducing this probability to $2/i^{1.5}$. We will now bootstrap Lemma \ref{Bi2} to a stronger failure probability, but first using the same definition as in Lemma \ref{Bi1} except with the $150/i$ probability replaced with $i^{-1.5}$.

For any set $S\subseteq X$, define $B_i^{(3)}(S)\subseteq X$ under a given matching $\mc{M}$ as follows. An object $x$ is in $B_i^{(3)}(S)$ if and only if at least 1/4 of the $(d-1)^{\log^2(i)}$ paths of length $\log^2(i)$ from $x$ end in an object $z\in W_{+\log^2(i)}(x)$ such that at least $i^{-1.5}$ proportion of the $(d-1)^{i-\log^2(i)}$ paths of length $i-\log^2(i)$ starting at $z$ end in $S$.

In other words, you could define
\[S'_3=\{z\in X:\ge (d-1)^{i-\log^2(i)}/i^{1.5}\text{ paths of length }i-\log^2(i)\text{ from }x\text{ end in }S\}\]and then \[B_i^{(3)}(S)=\{x\in X:\ge (d-1)^{\log^2(i)}/4\text{ paths of length }\log^2(i)\text{ from }x\text{ end in }S'_3\}.\]
\begin{lemma}\label{Bi3}
For any $d\ge 3$ and $c<c_d^*$, we have the following with high probability that for any matching $\mc{M}$: for any $S\subseteq X$ and $i\in\mb{N}$ with $i\ge C_3$ for some $C_3=\Theta(1)$ and $(d-1)^{-i^3}n<|S|<(d-1)^{-i^2/4.5}n$, we have that $|B_i^{(3)}(S)|<(d-1)^{-.7i}|S|$. 
\end{lemma}
\begin{proof}
This proof will follow much of the same structure of Lemma \ref{Bi1}, and will also use the result of Lemma \ref{Bi2}.

Let $Q\subseteq X$ such that $|Q|=(d-1)^{-.7i}|S|$. As in Lemma \ref{Bi1}, we will prove that, starting from a uniformly random point $x\in Q$, there is at least probability $\ge\frac 14$ that after $\log^2(i)$ steps, we are at a point $z$ such that more than a $1-i^{-1.5}$ proportion of the $(d-1)^{i-\log^2(i)}$ paths of length $i-\log^2(i)$ do not end in $S$. This shows that there must be some $x\in Q$ such that $x\notin B_i^{(3)}(S)$, proving that $|B_i^{(3)}(S)|<(d-1)^{-.7i}|S|$.

In the same way as in the proof of Lemma \ref{Bi1}, we see that \[
|W_{+\log^2(i)}(Q)|\ge(d-1)^{\log^2(i)}|Q|/3
\]and
\begin{equation}\label{another}
|W_{+j}(Q)|\ge\left(d-1-\frac{32(d-1)}{i^2}\right)|W_{+(j-1)}(Q)|
\end{equation}
for every $\log^2(i)\le j\le i$. Applying these iteratively, we get that \begin{align*}
|W_{+j}(Q)|&\ge (d-1)^{j-\log^2(i)}\left(1-\frac{32}{i^2}\right)|W_{+\log^2(i)}(Q)|\ge(d-1)^j\left(1-\frac{32}{i^2}\right)|Q|/3
\\&\ge(d-1)^j|Q|/4\ge(d-1)^{j-.7i}|S|/4
\end{align*}
for every $\log^2(i)\le j\le i$. We also have that, for every $i^{0.3}\le k\le i$, \begin{align*}
|W_{-1}(B_k^{(2)}(S))|&\le 3d\log\left(\frac{n}{|B_k^{(2)}(S)|}\right)|B_k^{(2)}(S)|\say{by Lemma \ref{nbrsofnbrs}}\\
&\le 3d\log\left(\frac{n}{(d-1)^{-.8k}|S|}\right)(d-1)^{-.8k}|S|
\\&\say{by Lemma \ref{Bi2}, as $k\ge C_2$, and $|S|>(d-1)^{-i^3}n\ge(d-1)^{-k^{10}}n$,}
\\&\say{\hspace{1cm}and $|S|<(d-1)^{-i^2/4.5}n\le(d-1)^{-k^2/4.5}n$}\\
&\le 3d\log\left(\frac{n}{(d-1)^{-.8i}(d-1)^{-i^3}n}\right)(d-1)^{-.8k}|S|\\
&\le 3d(\log(d-1))(i^3+.8i)(d-1)^{-.8k}|S|\\
\end{align*}
Comparing these two bounds, we get that for all $\log^2(i)\le j\le i-i^{0.3}$, \begin{align*}
|W_{-1}(B_{i-j}^{(2)}(S))|&\le3d(\log(d-1))(i^3+.8i)(d-1)^{-.8(i-j)}|S|\\
&\le (d-1)^{.8j-.8i+.05i}|S|\say{for all $i\ge C_3$}\\
&\le (d-1)^{(j-1)-.7i}|S|/(4i^3)\say{for all $i\ge C_3$}\\
&\le |W_{+(j-1)}(Q)|/(i^3)
\end{align*}
Now, consider the $(d-1)|W_{+(j-1)}(Q)|$ hashes leaving $W_{+(j-1)}(Q)$. If $j\le i-i^{0.3}$, at most an $i^{-3}$ proportion of them end inside $B_{i-j}^{(2)}(S)$.

Additionally, by \eqref{another}, at most an $\frac{32}{i^2}$ proportion of them under any ordering land on a slot already hashed to by a previous hash. If we use $R$ to denote the set of hashes that land on a slot that another hash from $W_{+(j-1)}(Q)$ also lands on, we have that $R$ is at most a $\frac{64}{i^2}$ proportion of the total hashes. Counting separately the $i^{-3}$ proportion ending up inside $B_{i-j}^{(2)}(S)$, the at most $\frac{64}{i^2}$ proportion outside of $B_{i-j}^{(2)}(S)$ corresponding to $R$ thus has at least probability $1-\frac{200}{i-j}$ of landing outside of $S$ after $i-j$ more steps ($i$ total steps).

Of the remaining hashes that go to unique locations at each step, at most $|S|$ end up in $S$.

Therefore, of the $(d-1)^{i-\log^2(i)}|W_{+\log^2(i)}(Q)|$ paths of length $i-\log^2(i)$ from $W_{+\log^2(i)}(Q)$, every path that lands in $S$ falls into one of the following categories:
\begin{itemize}
    \item At the first $j$ where its position is the same as the position of another one of the paths, it falls into $B_{i-j}^{(2)}(S)$
    \begin{itemize}
        \item At most an $i^{-3}$ proportion of paths for a given $\log^2(i)\le j\le i-i^{0.3}$
        \item At  most an $64i^{-2}$ proportion of paths for a given $i-i^{0.3}\le j\le i$
    \end{itemize}
    \item At the first $j$ where its position is the same as the position of another one of the paths, it does not fall into $B_{i-j}^{(2)}$
    \begin{itemize}
        \item At most an $\frac{64}{i^2}$ proportion of paths for a given $j$
        \item At most a $\frac{200}{i-j}$ proportion of the paths that fall into this category for this $j$ end up in $S$, if $i-j\ge i^{0.3}$
    \end{itemize}
    \item Has no $j$ such that its position at step $j$ is the same as the position of another one of the paths
    \begin{itemize}
        \item At most $|S|$ total paths
    \end{itemize}
\end{itemize}
Putting this together, the proportion that land in $S$ out of the of the $(d-1)^{i-\log^2(i)}|W_{+\log^2(i)}(Q)|$ paths of length $i-\log^2(i)$ from $W_{+\log^2(i)}(Q)$ is at most
\begin{align*}
{\frac{i-\log^2(i)}{i^{3}}}+&\frac{64}{i^2}\sum_{j=\log^2(i)}^{i-i^{0.3}}{ \frac{200}{i-j}}+\frac{64}{i^2}\sum_{j=i-i^{0.3}}^i 1+\frac{|S|}{(d-1)^{i-\log^2(i)}|W_{+\log^2(i)}(Q)|}\\
&\le i^{-2}+\frac{12800\log(i)}{i^2}+\frac{64i^{0.3}}{i^2}+\frac{|S|}{(d-1)^i|Q|/3}\\
&<i^{-1.5}/4.\say{for $i\ge C_3$}
\end{align*}
From here, we finish the proof as in Lemma \ref{Bi1}: The Markov inequality tells us that less than $\frac 14$ of the  elements in $W_{+\log^2(i)}(Q)$ have at least a $i^{-1.5}$ proportion of their paths ending in $S$, and thus (recalling the definition of $S'_3$ before the start of Lemma \ref{Bi3}),
\[|S'_3\cap W_{+\log^2(i)}(Q)|<|W_{+\log^2(i)}(Q)|/4\]
so then 
\[|W_{+\log^2(i)}(Q)\cap(X\setminus S’)|>(3/4)|W_{+\log^2(i)}(Q)|\ge (d-1)^{\log^2(i)}|Q|/4.\]
This finishes the proof, as it then must be true that more than $1/4$ of the $(d-1)^{\log^2(i)}|Q|$ paths of length $\log^2(i)$ leaving $Q$ must not end up in $S'_3$, meaning that $Q$ cannot be $B_i^{(3)}(S)$.
\end{proof}
Finally, we complete the analogy with $B_i^{(4)}(S)$, which will be defined for $B_i^{(3)}(S)$ in the way that $B_i^{(2)}(S)$ was for $B_i^{(1)}(S)$.

For any set $S\subseteq X$, define $B_i^{(4)}(S)\subseteq X$ under a given matching $\mc{M}$ as follows. An object $x$ is in $B_i^{(4)}(S)$ if and only if at least $2i^{-1.5}$ of the $(d-1)^i$ paths of length $i$ starting at $x$ end in $S$.

In other words, you could define \[B_i^{(4)}(S)=\{x\in X:\ge 2(d-1)^ii^{-1.5}\text{ paths of length }i\text{ from }x\text{ end in }S\}.\]

\begin{lemma}\label{Bi4}
For any $d\ge 3$ and $c<c_d^*$, we have the following with high probability that for any matching $\mc{M}$: for any $S\subseteq X$ and $i\in\mb{N}$ with $i\ge C_4$ for some $C_4=\Theta(1)$ and $(d-1)^{-i^3}n<|S|<(d-1)^{-i^2/4}n$, we have that $|B_i^{(4)}(S)|<(d-1)^{-.6i}|S|$.
\end{lemma}
\begin{proof}
This proof follows in exactly the same way as the proof of Lemma \ref{Bi2}. We will still present the same proof here for completeness.

We claim that $B_i^{(4)}(S)\subseteq W_{-\log^4(i)}(B_{i-\log^4(i)}^{(3)}(W_{-\log^4(i)}(S)))$. Then \[
|W_{-\log^4(i)}(S)|\le\left(3d\log\left(\frac{n}{|S|}\right)\right)^{\log^4(i)}|S|\le\left(3di^3\right)^{\log^4(i)}(d-1)^{-i^2/4}n\le(d-1)^{-i^2/4.5}n
\]for $i\ge C_4$, by Lemma \ref{nbrsofnbrs}. So
\begin{align*}
|B_{i-\log^4(i)}^{(3)}(W_{-\log^4(i)}(S)))|&\le(d-1)^{-0.7(i-\log^4(i))}|W_{-\log^4(i)}(S)|\say{by Lemma \ref{Bi3}}\\&\le(d-1)^{-0.65i}\left(3d\log\left(\frac{n}{|S|}\right)\right)^{\log^4(i)}|S|\say{by Lemma \ref{nbrsofnbrs}}
\\&\le (d-1)^{-0.65i}\left(3di^3\log(d-1)\right)^{\log^4(i)}|S|\say{as $(d-1)^{-i^3}n<|S|$}
\\&\le (d-1)^{-0.62i}|S|
\end{align*}
and thus 
\begin{align*}
&|W_{-\log(i)^4}(B_{i-\log^4(i)}^{(3)}(W_{-\log^4(i)}(S)))|\\
&\le\left(3d\log\left(\frac{n}{|B_{i-\log^4(i)}^{(3)}(W_{-\log^4(i)}(S))|}\right)\right)^{\log^4(i)}|B_{i-\log^4(i)}^{(3)}(W_{-\log^4(i)}(S))|,\quad\text{by Lemma \ref{nbrsofnbrs}}\\
\\&\le\left(3d\log\left(\frac{n}{(d-1)^{-0.62i}|S|}\right)\right)^{\log^4(i)}(d-1)^{-0.62i}|S|
\\&\le\left(3d(i^{10}+0.62i)\log(d-1)\right)^{\log^4(i)}(d-1)^{-0.62i}|S|
\\&\le (d-1)^{-0.6i}|S|
\end{align*}
as desired.

Now, let $x\notin W_{-\log^4(i)}(B_{i-\log^4(i)}^{(3)}(W_{-\log^4(i)}(S)))$, and we will prove that $x\notin B_i^{(4)}(S)$.

Let $x_2$ be the position that we reach after $\log^2(i-\log^4(i))$ steps. Because $\\x\notin B_{i-\log^4(i)}^{(3)}(W_{-\log^4(i)}(S))$, there is at least a 1/4 chance that $x_2$ has the property that there is at least $1-i^{-1.5}$ chance that we will be outside of $W_{-\log^4(i)}(S)$ in a further $(i-\log^4(i))-\log^2(i-\log^4(i))$ steps.

Also, because $x\notin W_{-\log^4(i)}(B_{i-\log^4(i)}^{(3)}(W_{-\log^4(i)}(S)))$, we know for sure that \\$x_2\notin W_{-(\log^4(i)-\log^2(i-\log^4(i)))}(B_{i-\log^4(i)}^{(3)}(W_{-\log^4(i)}(S)))$. Then if the 1/4 chance does not occur, we still have that after a further $\log^2(i-\log^4(i))$ steps from $x_2$, there is again at least a 1/4 chance that we are at a point $x_3$ such that with probability $1-i^{-1.5}$ we will be outside of $W_{-\log^4(i)}(S)$ in a further $(i-\log^4(i))-\log^2(i-\log^4(i))$ steps from $x_3$.

In this way, we see that we can iterate, and then for any $k\in\mb{N}$ such that $k\log^2(i-\log^4(i))<\log^4(i)$, it is true that after $k\log^2(i-\log^4(i))$ steps, we have probability at least $1-\left(\frac 34\right)^k$ to reach a point $x'$ such that at least a $1-i^{-1.5}$ fraction of paths from $x'$ are outside $W_{-\log^4(i)}(S)$ in a further $(i-\log^4(i))-\log^2(i-\log^4(i))$ steps from $x'$.

We plug in $k=\log^2(i)$. Then with probability $\ge 1-\left(\frac 34\right)^{\log^2(i)}\ge 1-i^{-1.5}$, in the first $\log^4(i)$ steps we have that we reach a point $x'$ such that at least a $1-i^{-1.5}$ fraction of paths from $x'$ are outside $W_{-\log^4(i)}(S)$ in a further $(i-\log^4(i))-\log^2(i-\log^4(i))$ steps from $x'$. This shows that if we do reach such an $x'$, then conditioned on reaching that $x'$ we have at least a $1-i^{1.5}$ of being outside of $S$ after exactly $i$ steps from our initial $x$ (as we reach $x'$ after a number of steps between $\log^2(i-\log^4(i))$ and $\log^4(i)$; and then a further $(i-\log^4(i))-\log^2(i-\log^4(i))$ steps later we are likely to be outside of $W_{-\log^4(i)}(S)$, meaning outside of $S$ after $i-k\log^2(i-\log^4(i))$ steps from $x'$ for every $1\le k\le \log^2(i)$ as desired).

Since the probability of reaching such an $x'$ is at least $1-i^{-1.5}$, we have probability at least $\left(1-i^{-1.5}\right)\left(1-i^{-1.5}\right)\ge1-2i^{-1.5}$ of being outside of $S$ after exactly $i$ steps from our initial $x$. Therefore, $x\notin B^{(4)}_i(S)$, as desired.
\end{proof}

Note that if we adjust the constants, we could repeat this process an arbitrary constant number of times, define a $B_i$ with $i^{-c}$ probability of hitting $S$ for any $c\in\mb{R}$, and still show this is smaller than $S$ by an exponential factor in $i$. But for our overall proof, the $2i^{-1.5}$ probability we have now obtained suffices.

\section{Chaining together the $B_i$ sets}\label{chaining}
Take $\alpha$ sufficiently small; in fact, what we need is that \[
\alpha < (d-1)^{-(C_4)^2/4}
\]using the $C_4$ from Lemma \ref{Bi4}. Let $G$ be the corresponding set given by Lemma \ref{allbutdelt} under the starting matching $\mc{M}$, where $G$ consists of all elements of BFS distance at most $M$ for some appropriate $M$, and let $B_{C_4}=X\setminus G$.

For all $i>C_4$, recursively define $B_i=B_i^{(4)}(B_{i-1})$.

\begin{lemma}\label{expsmallalli}
For any $d\ge 3$ and $c<c_d^*$, we have with high probability that for any matching $\mc{M}$, $|B_i|\le (d-1)^{-i^2/4}n$ for all $i\in\mb{N}$.
\end{lemma}
Note that, in particular, this implies that there is a $C'=\Theta(1)$ such that $B_i=\emptyset$ for all $i\ge C'\sqrt{\log n}$.

\begin{proof}
We will prove this by induction on $i$, noting that it is true for $i=C_4$.

If $|B_{i-1}|<(d-1)^{-i^3}n$, then we note that $B_i^{(4)}(B_{i-1})\subseteq W_{-i}(B_{i-1})$, so \begin{align*}
|B_i|&\le|W_{-i}(B_{i-1})|\le 3d\log\left(\frac{n}{|B_{i-1}|}\right)|B_{i-1}|\say{by Lemma \ref{nbrsofnbrs}}\\
&\le 3d\log\left(\frac{n}{(d-1)^{-i^3}n}\right)(d-1)^{-i^3}n\\
&\le 3d(i^3\log(d-1))(d-1)^{-i^3}n\\
&\le d^{-i^2/4}n
\end{align*}
as desired. Otherwise, we have that \[
(d-1)^{-i^3}n\le|B_{i-1}|\le(d-1)^{-(i-1)^2/4}n,
\]and thus we can apply Lemma \ref{Bi4} to say that \begin{align*}
|B_i|&\le(d-1)^{-.6i}|B_{i-1}|\say{by Lemma \ref{Bi4}}\\
&\le(d-1)^{-.6i}(d-1)^{-(i-1)^2/4}n\\
&\le (d-1)^{-.6i-i^2/4+i/2-1/4}n\\
&\le (d-1)^{-i^2/4}n
\end{align*}as desired.
\end{proof}

\begin{lemma}\label{iprocess}
Conditioned on starting at any vertex outside of $B_i$, the random walk has probability $\ge 0.99$ of being in $G$ (or having finished) in exactly $i(i+1)/2-C_4(C_4+1)/2$ steps.
\end{lemma}
\begin{proof}
By the definition of $B_j$, if we are at a vertex outside of $B_j$, then we have probability $\ge 1-\frac{2}{j^{1.5}}$ of being outside of $B_{j-1}$ after $j$ steps. Iterating this, we see that the probability of being in $G$ or finished (that is, outside of $B_{C_4}$) after $\sum_{j=C_4}^ij=i(i+1)/2-C_4(C_4+1)/2$ steps is \[
\ge 
\prod_{j=C_4}^i\left(1-\frac{2}{j^{1.5}}\right)\ge 1-\sum_{j=C_4}^i\frac{2}{j^{1.5}}\ge 1-\sum_{j=C_4}^\infty\frac{2}{j^{1.5}}\ge 0.99
\]as desired (using $C_4\ge 5000$).
\end{proof}
\begin{lemma}\label{finish}
For any $i\ge C_4$, conditioned on starting at any vertex outside of $W_{-10(M+1)(d-1)^M}(B_i)$, the random walk has probability $\ge 0.97$ of terminating (finishing in $U$) within $10(M+1)(d-1)^M+(i(i+1)/2-C_4(C_4+1)/2)$ steps.
\end{lemma}
\begin{proof}
For any random walk, denote by $x_t$ the location of the random walk after $t$ steps, with $x_0$ being the initial hash location. Recalling the constant $M$ and set $G$ from Lemma \ref{allbutdelt}, define an $M$-separated sequence in $G$ to be a list of steps $t_1,\dotsc,t_q$ such that $x_{t_1},\dotsc,x_{t_q}\in G$ and $t_{r+1}-t_r>M$ for every $1\le r<q$. That is, an $M$-separated sequence in $G$ is a list of times that the random walk is in $G$ where each time is at least $M$ steps after the previous. 

Given a random walk that goes until finishing at an empty slot, let $\zeta$ be the maximum length of an $M$-separated sequence in $G$. Then we have \[
\mb{P}(\zeta\ge s)\le(1-(d-1)^{-M})^s,
\]as for every time $t$ where $x_t\in G$, we have that the random walk will be finished in at most $M$ more steps with probability at least $(d-1)^{-M}$. 

Furthermore, note that this inequality $\mb{P}(\zeta\ge s)\le(1-(d-1)^{-M})^s$ still holds conditioned on any given starting location for the random walk. 

Let $\mc{G}_1$ denote the event that $\zeta\ge 5(d-1)^M$. Then $\mb{P}(\mc{G}_1)\le (1-(d-1)^{-M})^{5(d-1)^M}\le e^{-5}<0.01$. This is still true conditioned on starting at any vertex outside of $W_{-10(M+1)(d-1)^M}(B_i)$.   

Now, for any $k\in\mb{N}$, let $E_k$ denote the event that step $k(M+1)+(i(i+1)/2-C_4(C_4+1)/2)$ of the random walk is in $G$ or finished. The fact that the walk starts outside of $W_{-10(M+1)(d-1)^M}(B_i)$ means that for $k\le 10(d-1)^M$, step $k(M+1)$ of the random walk is not in $B_i$, and thus by Lemma \ref{iprocess}, there is probability at least 0.99 that step $k(M+1)+(i(i+1)/2-C_4(C_4+1)/2)$ of the random walk will either be in $G$ or finished.

Therefore, $\mb{P}(E_k)\ge 0.99$ for all $k\le 10(d-1)^M$. (We do not claim that these events are independent.) Then the expected number of $k\in\{0,\dotsc,10(d-1)^M-1\}$ such that $E_k$ does not occur is at most $0.1(d-1)^M$.

Let $\mc{G}_2$ be the event that there are at least $5(d-1)^M$ values $k\in\{0,\dotsc,10(d-1)^M-1\}$ such that $E_k$ does not occur. By Markov's inequality, $\mb{P}(\mc{G}_2)\le\frac{0.1(d-1)^M}{5(d-1)^M}=0.02$.   

Together, we get $\mb{P}(\mc{G}_1\cup\mc{G}_2)\le\mb{P}(\mc{G}_1)+\mb{P}(\mc{G}_2)\le 0.01+0.02=0.03$. 

Finally, we claim that if neither $\mc{G}_1$ nor $\mc{G}_2$ happen, then the random walk finishes within $10(M+1)(d-1)^M+(i(i+1)/2-C_4(C_4+1)/2)$ steps, which will complete the proof. 

If $\mc{G}_2$ does not happen, then there are more than $5(d-1)^M$ values of $k$ such that $E_k$ does occur. This means that there are more than $5(d-1)^M$ values of $k$ for which $k(M+1)+(i(i+1)/2-C_4(C_4+1)/2)$ is either in $G$ or finished. If all of those $k$ were in $G$ (and not finished), then that would produce a $M$-separated sequence in $G$ of length $\ge 5(d-1)^M$. However, $\mc{G}_1$ not occuring means that no such sequence exists. Therefore, there must be some $k\in\{0,\dotsc,10(d-1)^M-1\}$ such that step $k(M+1)+(i(i+1)/2-C_4(C_4+1)/2)$ is finished, completing the proof.
\end{proof}

\section{Improved Bounds on the Number of Paths to any Set}\label{improvedreaching}
We now seem to be very close to proving the theorem, as we have shown that there are sets $B_i$ such that $|B_i|$ declines exponentially with $i$, and there is $.97$ probability in finishing in $O(i^2)$ steps when starting outside of $B_i$. However, we do still need to deal with what happens in the 0.03 probability case. To complete our proof of Theorem \ref{theorem}, we need to improve the bounds on Lemma \ref{nbrsofnbrs}.

Lemma \ref{nbrsofnbrs} showed that $|W_{-j}(S)|\le (O(\log(n/|S|))^j|S|$. Intuitively, as an average slot has in expectation $d$ hashes to it, you should expect $|W_{-j}(S)|$ to grow like $d^j|S|$ for an average $S$. Rather than doing a new union bound over all $S\subseteq X$ of a given size at each of the $j$ steps as Lemma \ref{nbrsofnbrs} implicitly did, we can get a stronger result by overcoming a smaller union bound.   

\begin{lemma}\label{reachingsmall}
For any $d\ge 3$ and $c<c_d^*$, we have with high probability that for any matching $\mc{M}$ any $S\subseteq X$ with $|S|<|X|/(10d)$, and $0\le j\le \log^2(n)$, we have that \[|W_{-j}(S)|\le 10\left(2d+\log(d)\right)^je^{\left(\log^2\left(\log\left(\frac{n}{|S|}\right)\right)\right)}|S|.\]
\end{lemma}

\begin{proof}
First, imagine fixing some $S\subseteq X$ before any hashes are revealed. Then, we generate the hashes of the objects in $S$. We then perform a union bound over the $\le d^{|S|}$ choices for which out of $d$ slots each object in $S$ is occupying under $\mc{M}$. Next, we reveal which other hashes land in the slots that are occupied by $S$, thus determining $W_{-1}(S)$.

Then, we again continue iteratively, next generating the hashes of the objects in $W_{-1}(S)$ and union bounding over the $\le d^{|W_{-1}(S)|}$ choices of where they occupy.

Note that for any fixed choice of the slots occupied by $W_{-k}(S)$, we have that $|W_{-(k+1)}(S)|$ is stochastically dominated by the binomial random variable $Bin(dn,|W_{-k}(S)|/n)$, and so $\\\mb{E}(|W_{-(k+1)}(S))|\le d|W_{-k}(S)|$. By standard Chernoff bounds, for any $\lambda>0$ and for any particular choice of the slots occupied by $W_{-k}(S)$, which we denote by the ``conditioning on $\mc{M}$'', or ``$\mid\mc{M}$'' symbol, we have
\[
\mb{P}\left(|W_{-(k+1)}(S)|\ge (1+\lambda)d|W_{-k}(S)|\mid\mc{M}\right)\le e^{-\lambda^2d(|W_{-k}(S)|)/(\lambda+2)}.
\]
Intuitively, this means that $|W_{-j}(S)|$ should on average be upper bounded by $d^j|S|$. We will use these Chernoff bounds to get a result that holds even in our worst case. Let \[\lambda_k=\frac{\log(d)}{d}+\frac{4\log(en/|S|)}{d^{k+1}}+1.\]
We will induct on $k$ to bound $\mb{P}\left(|W_{-(k+1)}(S)|\ge \left(\prod_{\ell=0}^{k}(1+\lambda_\ell)\right)d^k|S|\right)$.

We claim that \[
\frac{\lambda_k^2}{\lambda_k+2}\ge\frac{\log(d)}{d}+\frac{2\log(en/|S|)}{d^{k+1}}
\]
To show this claim, we set $C_1=\frac{\log(d)}{d}\in(0,1)$ and $C_2=\frac{4|S|\log(en/|S|)}{d^{k+1}}\ge 0$. Then \begin{align*}
\frac{(C_1+C_2+1)^2}{C_1+C_2+3}&\ge C_1+0.5C_2\\\iff C_1^2+C_2^2+2C_1C_2+2C_1+2C_2+1&\ge C_1^2+0.5C_2^2+1.5C_1C_2+3C_1+1.5C_2\\\iff 0.5C_2^2+0.5C_1C_2+0.5C_2&\ge C_1-1,
\end{align*}which is true whenever $C_1\in(0,1)$ and $C_2>0$, as the left side will then be positive while the right side is negative. So, assuming (inductively) that $|W_{-k}(S)|\leq \left(\prod_{\ell=0}^{k-1}(1+\lambda_\ell)\right)d^{k-1}|S|$, we have for any particular choice of the slots occupied by $W_{-k}(S)$ that
\begin{align*}
\mb{P}\left(|W_{-(k+1)}(S)|\ge \left(\prod_{\ell=0}^{k}(1+\lambda_\ell)\right)d^k|S|\,\bigg|\,\mc{M}\right)&\le
\mb{P}\left(|W_{-(k+1)}(S)|\ge (1+\lambda_k)d|W_{-k}(S)|\mid\mc{M}\right)
\\&\say{by the inductive hypothesis}
\\&\le e^{-\lambda_k^2d(|W_{-k}(S)|)/(\lambda_k+2)}
\\&\le e^{-\left(\log(d)|W_{-k}(S)|+2|W_{-k}(S)|\log(en/|S|)/d^k\right)}
\\&\le e^{-\left(\log(d)|W_{-k}(S)|+2\left(\prod_{\ell=0}^{k-1}(1+\lambda_k)\right)|S|\log(en/|S|)\right)}
\\&\le e^{-(\log(d)|W_{-k}(S)|+2|S|\log(en/|S|))}
\\&\le d^{-|W_{-k}(S)|}\left(\frac{en}{|S|}\right)^{-2|S|}.
\end{align*}
Now, we union bound over the $\le d^{|W_{-k}(S)|}$ choices we faced to choose which slots the objects in $W_{-k}(S)$ were matched to. Therefore,\[
\mb{P}\left(|W_{-(k+1)}(S)|\ge \left(\prod_{\ell=0}^{k}(1+\lambda_\ell)\right)d^{k+1}|S|\right)\le\left(\frac{en}{|S|}\right)^{-2|S|}.
\]
Summing over all $1\le k\le j$ gives us that \[
\mb{P}\left(|W_{-j}(S)|\ge \left(\prod_{k=0}^{j-1}(1+\lambda_k)\right)d^j|S|\right)\le k\left(\frac{en}{|S|}\right)^{-2|S|}.
\]
Now, we union bound over the $\binom{n}{s}\le\left(\frac{en}{s}\right)^s$ choices for $S$ with $|S|=s$ to say that \begin{align*}
\mb{P}\left(\exists~S\subseteq X\text{ s.t. }|W_{-j}(S)|\ge \left(\prod_{\ell=0}^{j-1}(1+\lambda_k)\right)d^j|S|\right)&\le \sum_{s=1}^n j\left(\frac{en}{s}\right)^{-s}\\&\le\frac{j}{n}\sum_{s=1}^nse^{-s}\left(\frac sn\right)^{s-1}\le\frac{j}{n}\sum_{s=1}^nse^{-s}\\&\le\frac jn\frac{e}{(e-1)^2}\le o(1)
\end{align*}
as long as $j=o(n)$, which is true as it is $O(\log^2(n))$. 

Therefore, we have proven that with high probability, for every $S\subseteq X$ and $0\le j\le\log^2(n)$, we have that \[
|W_{-j}(S)|\le\left(\prod_{k=0}^{j-1}(1+\lambda_k)\right)d^j|S|.
\]
Now, what remains is to upper bound the product $\prod_{\ell=0}^{j-1}(1+\lambda_k)$.
\begin{align*}
&\left(\prod_{k=0}^{j-1}(1+\lambda_k)\right)d^j|S|\\
&=\left(\prod_{k=0}^{j-1}\left(1+\frac{\log(d)}{d}+\frac{4\log(en/|S|)}{d^{k+1}}+1\right)\right)d^j|S|
\\&=\left(\prod_{k=0}^{j-1}\left(1+\frac{4\log(en/|S|)}{(2+\log(d)/d)d^{k+1}}\right)\right)\left (2d+\log(d)\right)^j|S|
\\&\le\left(\prod_{k=0}^\infty\left(1+\frac{\log(en/|S|)}{d^k}\right)\right)\left (2d+\log(d)\right)^j|S|
\\&\le\left(\prod_{k=0}^\infty\left(1+\frac{\log(en/|S|)}{3^k}\right)\right)\left (2d+\log(d)\right)^j|S|
\\&\le\left(\prod_{k=0}^{\log_3\left(\log\left(\frac{en}{|S|}\right)\right)}\left(1+\frac{\log(en/|S|)}{3^k}\right)\right)\left(\prod_{k=\log_3\left(\log\left(\frac{en}{|S|}\right)\right)}^\infty\left(1+\frac{\log(en/|S|)}{3^k}\right)\right)\left (2d+\log(d)\right)^j|S|
\\&\le\left(\prod_{k=0}^{\log_3\left(\log\left(\frac{en}{|S|}\right)\right)}\left(1+\frac{\log(en/|S|)}{3^k}\right)\right)\left(\prod_{k=0}^\infty\left(1+\frac{1}{3^k}\right)\right)\left (2d+\log(d)\right)^j|S|
\\&\le  4\left(\prod_{k=0}^{\log_3\left(\log\left(\frac{en}{|S|}\right)\right)}\left(1+\frac{\log(en/|S|)}{3^k}\right)\right)\left (2d+\log(d)\right)^j|S|
\\&\le {4}\left(\prod_{k=0}^{\log_3\left(\log\left(\frac{en}{|S|}\right)\right)}\left(1+\log(en/|S|)\right)\right)\left (2d+\log(d)\right)^j|S|
\\&\le 4\left(\left(1+\log(en/|S|)\right)^{\log_3(\log(en/|S|))}\right)\left (2d+\log(d)\right)^j|S|
\\&\le {4}e^{\left(\log\left(1+\log(en/|S|)\right)\log_3(\log(en/|S|))\right)}\left (2d+\log(d)\right)^j|S|
\\&\le 10e^{\left(\log^2\left(\log(en/|S|)\right)\right)}\left (2d+\log(d)\right)^j|S|
\end{align*}
as desired.
\end{proof}

\section{Proof of Theorem \ref{theorem}}\label{finalproofsection}

\begin{theorem}
For any $d\ge 3$ and $c<c_d^*$, we have with high probability that the expected insertion time is $O(1)$.
\end{theorem}
\begin{proof}
Essentially, the idea here is that if we fail to finish in $\le i^2$ on the run starting outside $B_i$ (which happens with probability at most 0.03), then we are probably still outside $B_{3i}$, so we try again with a run starting there, then $B_{9i}$, and so on.

Note that $|B_{3^ki}|\le (d-1)^{-9^ki^2/4}n$ by Lemma \ref{expsmallalli}. Furthermore, if the walk is currently outside of $B_{3^ki}$, then Lemma \ref{finish} says that we have probability at least 0.97 of finishing in $10(M+1)(d-1)^M+(3^ki(3^ki+1)/2-C_4(C_4+1)/2)$ steps. For sufficiently large $i$, we have that $10(M+1)(d-1)^M+(3^ki(3^ki+1)/2-C_4(C_4+1)/2)\le (0.51)9^ki^2$.

Formally, for all $i\ge C_5$ for a sufficiently large constant $C_5$, let $\mc{E}_i$ be the event that the starting hash of the object we are inserting is outside of $W_{-9^ki^2/15}(B_{3^ki})$ for every $k\in\mb{Z}_{\ge 0}$. Note that if $C_5$ is sufficiently large (specifically, if $C_5^2/15\ge 10(M+1)(d-1)^M$), then the $k=0$ case of this hypothesis includes being outside of $W_{-(10(M+1)(d-1)^M)}(B_i)$, satisfying the hypothesis of Lemma \ref{finish}.\\

We now bound $\sum_k|W_{-9^ki^2/15}(B_{3^ki})|$. For all $i\ge C_5$, we have
\begin{align*}
|W_{-9^ki^2/15}(B_{3^ki})|&\le O\left((2d+\log(d))^{9^ki^2/15}e^{\left(\log\left(\log\left(\frac{n}{|B_{3^ki}|}\right)\right)^2\right)}|B_{3^ki}|n\right)
\\&\say{by Lemma \ref{reachingsmall}}
\\&\le O\left((2d+\log(d))^{9^ki^2/15}e^{\left(\log\left((3^ki)^2\log(d-1)/2\right)^2\right)}(d-1)^{-(3^ki)^2/4}n\right)
\\&\say{by Lemma \ref{expsmallalli}}
\\&\le O\left((2d+\log(d))^{9^ki^2/15}e^{\left((\log(3^ki))^3\right)}(d-1)^{-9^ki^2/4}n\right)
\\&\le O\left(e^{\left((\log(3^ki))^3+9^ki^2\left(\log(2d+\log(d))/15-\log(d-1)/4\right)\right)}n\right)
\\&\le O\left(e^{-9^ki^2/50}n\right)\say{for $i\ge C_5$ and $d\ge 3$}
\end{align*}

Then \begin{align*}
\left|\bigcup_{k=0}^\infty W_{-9^ki^2/15}(B_{3^ki})\right|&\le\sum_{k=0}^\infty O\left(e^{-9^ki^2/50}n\right)
\\&\le O(e^{-i^2/50}n)
\end{align*}
In particular, this means that $\mc{E}_i$ happens with probability at least $1-O(e^{-i^2/50})$.

Conditioned on starting on any specific vertex under which $\mc{E}_i$ happens, we claim the expected run-time is $O(i^2)$. By Lemma \ref{finish}, since we started outside of $W_{-(10(M+1)(d-1)^M)}(B_i)$, there is a $\ge 0.97$ probability of finishing in $\le 0.51i^2$ steps (again using $i\ge C_5$). If we do not finish after $0.51i^2$ steps (the $\le0.03$ event occurs), then because we started outside of $W_{-9i^2/15}(B_{3i})$ and thus outside of $W_{-((0.51)8i^2/15+10(M+1)(d-1)^M)}(B_{3i})$, we are still outside of $W_{-(10(M+1)(d-1)^M)}(B_{3i})$. Then by Lemma \ref{finish}, there is now a $0.97$ probability of finishing in $0.51(3i)^2$ more steps.

In general, after $k$ iterations we have taken 
\[
\sum_{q=0}^k0.51(3^qi)^2\le (0.58)9^ki^2<(9^{k+1})i^2/15-10(M+1)(d-1)^M
\] 
steps, so we are still outside of $W_{-(10(M+1)(d-1)^M)}(B_{3^ki})$. The chance of reaching the $k$-th stage without finishing is $0.03^{k}$, and the number of steps taken through the $k$-th stage is $(0.58)9^ki^2$. Therefore, the total expected number of steps taken is at most 
\[
\sum_{k=0}^\infty (0.03^k)((0.58)9^ki^2)\le i^2
\]

So, conditioned on starting at a given vertex under which $\mc{E}_i$ does not happen, the expected run time of the random walk is at most $i^2$.

For any $i\ge C_5$, let $\mc{F}_i$ be the event that $\mc{E}_i$ happens but $\mc{E}_j$ does not happen for every $C_5\le j<i$. This is a partition of where our starting hash lands. Let $T$ be the run time of our random walk. Then by the law of total probability,
\begin{align*}
\mb{E}(T)&=\sum_{i=C_5}^{\infty}\mb{E}(T|\mc{F}_i)\mb{P}(\mc{F}_i)
\\&\le(C_5)^2+\sum_{i=C_5+1}^{\infty}\mb{E}(T|\mc{F}_i)\mb{P}(\mc{F}_i)\say{as we start at a vertex where $\mc{E}_{C_5}$ happens}
\\&\le O(1)+\sum_{i=C_5+1}^{\infty}(i^2)\mb{P}(\mc{F}_i)\say{as we start at a vertex where $\mc{E}_i$ happens}
\\&\le O(1)+\sum_{i=C_5+1}^{\infty}(i^2)(O(e^{-(i-1)^2/50}))\say{as $\mc{E}_{i-1}$ does not happen}
\\&\le O(1)
\end{align*}as desired.
\end{proof}

\section{Note on Tail Bounds}\label{tailsection}
\begin{corollary}\label{tailbounds}
Let $C_6>1$. There exists a constant $C_7=C_7(C_6,c,d)=\Theta(1)$ such that for all $\ell\in\mb{N}$, the probability that the random walk takes more than $\ell$ steps is at most $C_7\ell^{-C_6}$.    
\end{corollary}
\begin{proof}
We can assume that $\ell$ is sufficiently large in terms of $C_6$, $d$, and $\epsilon$ by increasing $C_7$ accordingly.

First, note that for any $\epsilon_1>0$, the value 0.99 in Lemma \ref{iprocess} can be replaced with $1-\epsilon_1$ by requiring $C_4$ to be large enough such that $\sum_{j=C_4}^\infty\frac{2}{j^{1.5}}\le\epsilon_1$.

Correspondingly, the 0.97 in Lemma \ref{finish} can be replaced with $1-\epsilon_2$ any $\epsilon_2\ge 0$ as well. This is because we can take $\epsilon_1=\epsilon_2/3$, replace $10(d-1)^M$ with $2\log(3/\epsilon_2)(d-1)^M$ and $5(d-1)$ with $\log(3/\epsilon_2)(d-1)^M$.

Then, letting $\mc{G}_1$ in Lemma \ref{finish} denote the event that $\zeta\ge \log(3/\epsilon_2)(d-1)^M$, we get $\mb{P}(\mc{G}_1)\le (1-(d-1)^{-M})^{\log(3/\epsilon_2)(d-1)^M}\le e^{-\log(3/\epsilon_2)}=\epsilon_2/3$.

Similarly, we use the same definition of $E_k$ and let $\mc{G}_2$ in Lemma \ref{finish} denote the event that there are at least $\log(3/\epsilon_2)(d-1)^M$ values $k\in\{0,\dotsc,2\log(3/\epsilon_2)(d-1)^M-1\}$ such that $E_k$ does not occur. we get $\mb{P}(\mc{G}_2)\le \frac{\epsilon_1(2\log(3/\epsilon_2)(d-1)^M)}{\log(3/\epsilon_2)(d-1)^M}=2\epsilon_1=2\epsilon_2/3$.

This gives a total failure probability of at most $\epsilon_2$.

Now, when considering the probability that the random walk takes at least $\ell$ steps, we partition on whether $\mc{E}_{\log(\ell)}$ happens, where the definition of $\mc{E}_i$ is taken from Section \ref{finalproofsection}.

The probability that $\mc{E}_{\log(\ell)}$ does not happen is $O(e^{-\log^2(\ell)/50})=O(\ell^{-\log(\ell)/50})\le O(\ell^{-C_6})$ for sufficiently large $\ell$.

If $\mc{E}_{\log(\ell)}$ does happen, then as in Section \ref{finalproofsection}, the probability of being finished after $k$ iterations is $(\epsilon_2)^k$, and the total number of steps taken up to iteration $k$ is $\le (0.51)9^k(\log(\ell))^2$. Taking $k$ to be $\log_9(\sqrt\ell)$, we see that the probability of having taking more than \[
(0.51)9^{\log_9(\sqrt\ell)}(\log(\ell))^2=0.51\sqrt\ell(\log(\ell))^2<\ell
\]steps is at most\[
(\epsilon_2)^{\log_9(\sqrt\ell)}=\ell^{\log_9(\sqrt{\epsilon_2})}\le\ell^{-C_6}
\]as long as we have made $\epsilon_2$ sufficiently small in terms of $C_6$.
\end{proof}

This shows that the tail bounds on the random walk decline faster than any polynomial. This does not show that these tail bounds are exponential: for instance, we have not excluded the possibility  that the probability of the random walk taking $\ell$ steps is $\Theta(e^{-(\log^2(\ell))})$. We believe that the true tail bounds should be exponentially decreasing (for some base of the exponent):
\begin{conjecture}\label{conjecturedtail}
There exists constants $C_8$ and $C_9$ such that for all $\ell\in\mb{N}$, the probability of the random walk taking at least $\ell$ steps is at most $C_8((C_9)^{-\ell})$.
\end{conjecture}

\section{Modified Insertion Algorithms}\label{BFSSection}
Throughout this paper, we have studied a form of random walk insertion where an evicted object chooses uniformly at random one of its other $d-1$ hash values to insert at next. This seems critical, as we are looking at $(d-1)^i$ possibilities of length $i$. However, some implementations of random walk insertion may simply choose to insert an object at any one of its $d$ hash values, including the one it was just evicted from.
\begin{corollary}
Theorem \ref{theorem} still holds for the form of random walk insertion where each object chooses uniformly among its $d$ hash functions for re-insertion at each step.
\end{corollary}
\begin{proof}
In order to prove this, we will give a coupling from the random walk with backtracking, to the random walk without backtracking but with some delays. Essentially, every time the walk backtracks, we can imagine that it just stayed in the same spot for the same amount of time that the backtracking took. We will show that the expected time ``wasted'' by this backtracking simply multiplies the expected random walk time by at most a $O(1)$ factor. Therefore, if the non-backtracking walk had $O(1)$ expected time, then the backtracking walk also has $O(1)$ expected time.

Every time we are at an object $x_{i+1}$ on step $i+1$ of the random walk and choose the hash that $x_{i+1}$ was just evicted from, that means that we return again to the previous object $x_i$. Essentially, we will charge this time backtracked as time wasted while at $x_i$. So, at every step $x_i$, we unveil how many steps will be ``wasted'' only to eventually end up back at $x_i$ through backtracking, and charge those to $x_i$ right then before going to the new object $x_{i+1}$.

To do this, we want to upper bound the probability of backtracking to return to $x_i$. This does not include the probability that we cycle around on new hashes to return to $x_i$, so we are only thinking of returning through the exact same hashes we leave from $x_i$ on.

To return to $x_i$ by backtracking in exactly $2t$ steps (at time $i+2t$), we need to choose $t$ of those steps to be backtracks. Each of those $t$ steps has an independent $\frac 1d$ probability of indeed being a backtrack, while the other $t$ cannot be a backtrack, which has an independent $\frac{d-1}{d}$ probability for each. Therefore, the probability that we return to $x_i$ by backtracking in $2t$ steps is upper bounded by $\binom{2t}{t}d^{-t}\left(\frac{d-1}{d}\right)^t\le\left(\frac{4(d-1)}{d^2}\right)^t$. Note that this does already include the situation where we backtrack to $x_i$ in fewer than $2t$ steps, and then backtrack again to reach $x_i$ again exactly $2t$ steps after time $i$.

Then for all $d\ge 3$, the expected delay at $x_i$ from backtracking is at most $\sum_{t=1}^\infty 2t\left(\frac{4(d-1)}{d^2}\right)^t=O(1)$ as desired, as the sum is convergent.
\end{proof}

An insertion algorithm that differs significantly from random walk insertion is BFS insertion. Recall that BFS insertion refers to the insertion algorithm where we compute the shortest augmenting path and reassign objects along that. There are $d(d-1)^{i-1}$ possibilities for paths of length $i$. We will now note that $O(1)$ expected time for BFS insertion comes as a corollary of Lemmas \ref{allbutdelt} and \ref{FPSneighbors}.
\begin{corollary}\label{BFS}
Let $d\ge 3$ and $c<c_d^*$. With high probability, BFS Insertion takes $O(1)$ expected time.
\end{corollary}
\begin{proof}
For all $i\in\mb{N}$, let $D_i$ be the set of all elements at BFS distance of at least $i$ from $U$, the set of unoccupied slots. Note that every slot in $N(D_{i+1})$ must be occupied by an object in $D_i$, so $|N(D_{i+1})|\le |D_i|$.

By Lemma \ref{FPSneighbors}, we note that there is a constant $\alpha=\Theta(1)$ such that if $1\le |S|\le |X|/\alpha$, then $|N(S)|\ge(d-1.5)|S|$, as making $|X|/|S|$ a sufficiently large constant makes $p_{|S|}<0.5$. Apply Lemma \ref{allbutdelt} with this $\alpha$, and let $M$ be the constant that results. The previous paragraph then implies that for every $i\ge M$, we have $|D_{i+1}|(d-1.5)\le|N(D_{i+1})|\le |D_i|$. Applying this iteratively, we get that $|D_{i+M}|\le (d-1.5)^i|D_M|\le (d-1.5)^in$ for every $i\ge M$. Then there exists a $C=\Theta(1)$ (in particular, $C=(d-1.5)^M$) such that for every $i\in\mb{N}$, $|D_i|\le C(d-1.5)^{-i}n$.

The run time of BFS insertion on an object $x$ that is at BFS distance $i$ from $U$ can be bounded by $O((d-1)^i)$, as noted in \cite{FPSS03}. Then using a similar argument to \cite{FPSS03}, we find that \begin{align*}
\mb{E}(\text{BFS Insertion Time})&=O\left(\sum_{i\in\mb{N}}(d-1)^i\mb{P}(x\text{ at BFS distance }\ge i)\right)\\&=O\left(\sum_{i\in\mb{N}}(d-1)^i\mb{P}(h_j(x)\in D_i~\forall~1\le j\le d)\right)\\&=O\left(\sum_{i\in\mb{N}}(d-1)^i\left(\frac{|D_i|}{n}\right)^d\right)\\&=O\left(\sum_{i\in\mb{N}}(d-1)^i(d-1.5)^{-id}\right)=O(1)
\end{align*}as $(d-1)(d-1.5)^{-d}<1$ for all $d\ge 3$.
\end{proof}         \section{Future Work}\label{futureworksection}

One line of improvement would be to improve the tail bounds on the number of steps in the random walk beyond what was proven in Corollary \ref{tailbounds}. Proving (or disproving) Conjecture \ref{conjecturedtail} would be a good goal, though weaker improvements would also be worthwhile.

It would also be interesting to give a stronger bound on the $o(1)$ term in our ``with high probability'' statements. A careful analysis of our and previous works (\cite{FP12,FPS13}) shows that this probability (originating from Lemmas \ref{allbutdelt}, \ref{Upperboundnbrs}, \ref{fewcycles}, \ref{FPSneighbors}, and \ref{reachingsmall}) could currently be taken to be $O(n^{-\beta})$ for some small $\beta=\Theta(1)$. By a union bound, the failure probability also implies that the $O(1)$ expected insertion time is robust to a sequence of $O(n^\beta)$ non-hash-dependent deletions and insertions of new elements (not allowing re-insertions of previously deleted elements). Note that $\beta<1$, so the load factor will remain below $c_d^*$.

Now that we have an insertion time independent of $n$, another avenue for future study is to optimize the insertion time in terms of $d$, $c$, and absolute constants. Subsequent to the initial version of this paper, Kuszmaul and Mitzenmacher have done work along this line \cite{KM25}.

It has been shown under some previous models of cuckoo hashing that the assumption of uniformly random hash functions can be relaxed to families of efficiently computable hash functions while retaining the theoretical insertion time guarantees \cite{indephash,explicithash}. As our proof relies on similar ``expansion-like'' properties of the bipartite graph to previous work, we believe that Theorem \ref{theorem} should still hold under practically computable hash families.

A different model for generalizing cuckoo hashing, proposed in 2007, gives a capacity greater than one to each hash table slot (element of $Y$), instead of (or in addition to) additional hash functions \cite{multistorage}. The load thresholds for this model are known for both two hashes \cite{binsthreshold1,binsthreshold2} and $d\ge 3$ hashes \cite{FKPloadthresholds}. As in our model, $O(1)$ expected time for random walk insertion has been shown for some values below the load threshold \cite{FriezePetti,Walzer}, but it remains open for any capacities greater than one to prove $O(1)$ insertion up to the load thresholds.

In general, it would be nice to extend our random walk insertion time guarantees to other modifications of cuckoo hashing, such as those schemes that increase the probability of a valid matching \cite{origstash,newstash,betterdary}.
\subsection*{Acknowledgment}We thank Stefan Walzer and the anonymous referees for their helpful comments and discovering issues with a previous version.
\bibliographystyle{alpha}
\bibliography{Cuckoo}
\end{document}